\documentclass{article}

\usepackage{microtype}
\usepackage{color,graphicx}
\usepackage{subcaption}
\usepackage{booktabs} 
\usepackage{cuted,tcolorbox,lipsum}
\usepackage{multicol}
\usepackage{xcolor}
\usepackage{algorithm}
\usepackage[noend]{algpseudocode}

\usepackage{hyperref}

\usepackage[preprint]{icml2026}

\usepackage{amsmath}
\usepackage{amssymb}
\usepackage{mathtools}
\usepackage{amsthm}

\usepackage{graphicx} 
\usepackage{amsfonts}
\usepackage{xcolor}
\usepackage{wrapfig}
\usepackage{mathtools}
\usepackage{enumitem}

\usepackage[capitalize,noabbrev]{cleveref}

\newtheorem{theorem}{Theorem}[section]

\newtheorem{lemma}[theorem]{Lemma}
\newtheorem{corollary}[theorem]{Corollary}
\newtheorem{definition}[theorem]{Definition}

\newtheorem{claim}[theorem]{Claim}

\newcommand{\vecdot}[2]{\left\langle #1,\, #2 \right\rangle}
\newcommand{\abs}[1]{\left|#1\right|}
\newcommand{\norm}[1]{\left\|#1\right\|}
\newcommand{\prob}[1]{\Pr\left[#1\right]}
\newcommand{\eps}{\varepsilon}
\newcommand{\R}{\mathbb{R}}
\usepackage{thmtools} 
\usepackage{thm-restate}

\providecommand{\E}{{\rm \mathbb{E}}}

\usepackage[textsize=tiny]{todonotes}

\icmltitlerunning{Adversarially Robust Approximate Furthest Neighbor}

\begin{document}

\twocolumn[
  \icmltitle{Adversarially Robust Approximate Furthest Neighbor}


  \icmlsetsymbol{equal}{*}

  \begin{icmlauthorlist}
    \icmlauthor{Kiarash Banihashem}{equal,umd}
    \icmlauthor{Jeff Giliberti}{equal,umd}
    \icmlauthor{Prashant Gokhale}{equal,uwm}
    \icmlauthor{Samira Goudarzi}{equal,umd}
    \icmlauthor{MohammadTaghi Hajiaghayi}{equal,umd}
    \icmlauthor{Yuhao Liu}{equal,uwm}
    \icmlauthor{Morteza Monemizadeh}{equal,tue}
    \icmlauthor{Sandeep Silwal}{equal,uwm}
  \end{icmlauthorlist}

  \icmlaffiliation{umd}{Department of Computer Science, University of Maryland, College Park, MD, USA}
  \icmlaffiliation{uwm}{Department of Computer Science, University of Wisconsin-Madison, Madison, WI, USA}
  \icmlaffiliation{tue}{Department of Mathematics and Computer Science, TU Eindhoven, Eindhoven, The Netherlands}

  \icmlcorrespondingauthor{Jeff Giliberti}{jeffgili@umd.edu}

  \icmlkeywords{Furthest Neighbor, Dynamic Algorithms, Adaptive Adversary}

  \vskip 0.3in
]



\printAffiliationsAndNotice{\icmlEqualContribution}

\begin{abstract}
We work in the \emph{adaptive query model}, where
one is given a point set $P \subset \mathbb{R}^d$ and seeks to construct a 
data structure that can answer correctly and efficiently a sequence of \emph{adaptive} queries.
In this model, an adversary observes the answers returned by the data structure
to previous queries $q_1, \ldots, q_{i-1}$ and, based on this information, chooses
the next query point $q_i$.
This setting captures strong forms of adaptivity that naturally arise in modern
machine learning pipelines, and rules out many
classical randomized techniques that assume oblivious queries.
Our focus is the problem of \emph{furthest neighbor search} in this adaptive
setting, a fundamental problem in several learning tasks,
including diversity maximization, outlier and anomaly detection, adversarial
example generation, and more. We present the first adversarially robust data structure for $c$-approximate furthest
neighbor queries that achieves query time $\tilde{O}( \min( d n^{1/c^2}, n^{2/c^2} + d))$. This matches the $n$ dependency in the query time of the seminal result by Indyk~[SODA'03] for $c$-approximate furthest neighbor in the \emph{oblivious setting}, and  improves upon the $\tilde{O}(n + d)$ query time achieved via the adaptive distance estimation framework of Cherapanamjeri and Nelson~[NeurIPS'20] for a wide range of natural parameters. To complement this result, we present an adversarial attack against oblivious approximate furthest neighbor algorithms. Specifically, we show that the data structure from the algorithm by Indyk fails to maintain its guarantees against adaptive queries.  
\end{abstract}
\section{Introduction}
There has been increasing interest in understanding the behavior of algorithms
when deployed in adaptive or adversarial environments. 
Such settings arise naturally in interactive data analysis, online learning,
and security-sensitive applications, where future inputs may depend on past
outputs of the algorithm.
A broad line of work studies robustness and validity under adaptivity in areas
including exploratory data analysis, statistical inference, and machine learning~\cite{dwork2015generalization, dwork2015reusable, dwork2015preserving, bassily2016algorithmic, dwork2017exposed}. 
These works focus primarily on preserving statistical guarantees when analysts
adaptively explore a fixed dataset. Related concerns have also appeared in adversarial learning and security domains,
including malware detection, intrusion detection, strategic classification,
and autonomous systems~\cite{biggio2013evasion, hardt2016strategic, goodfellow2014explaining, yuan2019adversarial, liu2017delving, papernot2016transferability}.
In these applications, adaptivity can fundamentally invalidate guarantees of
randomized algorithms designed for oblivious inputs.

Recent research has increasingly focused on the adversarial robustness of high-dimensional geometric algorithms. This body of work includes computing all distances from a query point to a dataset~\cite{CN20}, nearest neighbor search~\cite{AHKO26}, approximate diameter~\cite{banihashem2025dynamic}, clustering problems~\cite{efficientcentroidlinkageclustering, bateni2023optimal}, and other applications \cite{CN22, cherapanamjeri2023robust}. Continuing this line of work, we investigate the adversarial robustness of sublinear algorithms for fundamental geometric primitives when faced with an adaptive adversary.

\noindent \textbf{Nearest and Furthest neighbors.}
Nearest neighbor search is a fundamental computational primitive in high-dimensional data analysis and machine learning. It serves as a cornerstone for numerous applications including information retrieval, recommendation systems, metric learning, DNA sequencing, and more~\cite{han2024hyperattention, nnbook, deepnn, approxnn}. 
While the nearest neighbor problem has been studied extensively for decades, its dual counterpart---\emph{furthest neighbor search}---has recently gained prominence due to its importance in modern machine learning workflows. 

Furthest neighbor queries naturally arise in tasks such as outlier and anomaly detection~\cite{outlierdetection}, diversity maximization~\cite{anand2025graphbased}, hard negative mining for contrastive learning~\cite{robinson2021contrastive}, adversarial example generation~\cite{adversarial}, exploration in reinforcement learning, nonlinear dimensionality reduction~\cite{nonlineardimreduction, nonlineardimreduction2}, and complete linkage clustering~\cite{completelinkageclustering, completelinkageclustering2}.

In many of these applications, queries are not predetermined but sequentially generated, with each query depending on previous responses. For example, in active learning or online exploration, the next query point is often chosen based on previously identified extreme or informative points. Similarly, in interactive data analysis and adaptive optimization, queries are refined based on intermediate outcomes. This sequential dependency fundamentally challenges classical randomized data structures, whose guarantees typically rely on the assumption that queries are fixed in advance.

To formalize this challenge, \citet{CN20} (and shortly thereafter \cite{frameworkrobuststreaming}) introduced the \emph{adaptive query model}. In this model, a data structure must answer a sequence of queries $q_1, q_2, \dots$ where an adversary, after observing all previous answers, adaptively chooses the next query. This strong adversarial setting captures the feedback loops inherent in many machine learning pipelines and invalidates many classical randomized analyses. 
As shown in~\cite{CN20}, adaptivity can dramatically increase the complexity of even the most basic geometric queries.

\paragraph{Our problem}
In this paper, we study \emph{approximate furthest neighbor search} in the adaptive query model. Given a point set $P \subset \mathbb{R}^d$ of size $n$ and an approximation factor $c \ge 1$, the goal is to preprocess $P$ into a low-memory randomized data structure that, for each adaptively chosen query point $q$, returns a point $p \in P$ whose distance to $q$ is within a factor $c$ of the true furthest neighbor distance (see \cref{sec:algo} for the formal setup). While exact furthest neighbor is computationally prohibitive in high dimensions, approximate solutions suffice for most learning applications and offer the potential for sublinear query time.

Prior to our work, the best known data structure for adaptive furthest neighbor queries was due to \citet{CN20} who achieved $\tilde{O}(n + d)$ query time for $c$-approximate queries, essentially matching a trivial linear scan over all points after a JL transformation (see e.g. \cite{JL, larsen2017optimality}). On the other hand, the best known \textit{oblivious} algorithm by \citet{I03} achieves $\tilde{O}(dn^{1/c^2})$ query time for a $c$-approximation. This leaves open a fundamental question: 
\begin{center}
    \emph{Is sublinear in $n$ query time achievable for approximate furthest neighbor search under full adaptivity?} 
\end{center}

\paragraph{Our contributions}
We answer this question affirmatively by designing a randomized sublinear data structure for \emph{adaptive} approximate furthest neighbor search (AFN). 

\begin{restatable}{theorem}{MainResult}
    For any constants $c > 1$ and $\eps > 0$, given a set $P$ of $n$ points, there is an adversarially robust algorithm for the approximate furthest neighbor (AFN) problem that returns a $c$-approximate-AFN in  $\tilde O(dn^{1/c^2})$ query time with high probability, even if the queries are generated adaptively by an adversary, and using  $\tilde O(d  n^{2/c^2}  + d^2 n^{1/c^2})$ space.
    
    Moreover, with query time $\tilde O(n^{2/c^2} + d)$, we can obtain a $c(1+\eps)$ approximation under the same adaptive model using $\tilde O(d^2n^{2/c^2})$ space.

    Both variants have $\tilde O(d^2n^{1+1/c})$ preprocessing time.
\end{restatable}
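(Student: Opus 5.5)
Our plan is to derandomize the query-dependent part of Indyk's construction~\cite{I03} with a net argument over the query space, so that the only randomness the adversary could exploit is fixed in advance and shown to work for \emph{every} query at once. Recall Indyk's scheme: pick $L=\tilde O(n^{1/c^2})$ random Gaussian directions $u_1,\dots,u_L$, project $P$ onto each, and keep for each direction the top $k=\tilde O(n^{1/c^2})$ points by projection value. A query $q$ computes $\langle u_i,q\rangle$ for each $i$, scans the candidates whose projection gap $\langle u_i,p\rangle-\langle u_i,q\rangle$ is largest, and returns the furthest one found. The failure probability per fixed query is $1/\mathrm{poly}(n)$. The adaptive adversary can break this only by steering $q$ towards a direction correlated with the hidden $u_i$'s. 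We will therefore prove a \emph{uniform} guarantee: with high probability over the $u_i$'s, the structure answers correctly for all $q\in\R^d$ simultaneously. Once this holds, adaptivity is irrelevant, since the transcript cannot make a good event fail.

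\textbf{Step 1 (reduction to a bounded query range).} Let $D=\max_{p\in P}\|p-p_0\|$, computed approximately in preprocessing. If $\|q-p_0\|\ge \mathrm{poly}(n)\cdot D$, every point is a $(1+\eps)$-approximate furthest neighbor, so any answer works. Otherwise $q$ lies in a ball of radius $\mathrm{poly}(n)D$. Moreover the furthest distance from $q$ is always at least $D/2$. Hence an additive perturbation of $q$ by $\eps D/\mathrm{poly}(n)$ changes all relevant distances by a factor $(1\pm\eps)$. \textbf{Step 2 (net).} Take an $\eps D/\mathrm{poly}(n)$-net $N$ of this ball, of size $(n/\eps)^{O(d)}$. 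Amplify Indyk's failure probability from $1/\mathrm{poly}(n)$ to $(n/\eps)^{-\Theta(d)}$ by repeating it $O(d\log n)$ times independently. This yields the extra $d$ factor in the space. \textbf{Step 3.} Union bound over $N$. Then argue stability: the answer for $q$ is within $(1\pm\eps)$ of the answer for its nearest net point $q'$. This holds because projections $\langle u_i,q\rangle$ move by at most $\|u_i\|\cdot\|q-q'\|$, which is tiny relative to the scale of the threshold gap. Absorbing $\eps$ into $c$ by running with $c/(1+\eps)$ keeps the exponent $1/c^2$ up to $\tilde O$ and constants.

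\textbf{The obstacle.} The main difficulty is making the amplified structure cost only $\tilde O(dn^{1/c^2})$ query time rather than a $d$-factor blowup. Querying all $O(d\log n)$ copies is too slow. Instead we query one (or $\tilde O(1)$) copies chosen by fresh private randomness at query time and argue with a differential-privacy or sampling argument in the style of~\cite{CN20}: for every fixed $q$, a $0.9$ fraction of copies succeed, uniformly over the net. So a random copy, or a median-style selection over $O(\log n)$ random copies, succeeds whp regardless of history, because the fresh coins are independent of the transcript. This gives space $\tilde O(d\cdot dn^{1/c^2})=\tilde O(d^2n^{1/c^2})$ for the directions, plus candidate lists $\tilde O(dn^{2/c^2})$ counting stored points, matching the first bound.

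\textbf{Second variant.} For the $\tilde O(n^{2/c^2}+d)$ query time, we first apply a JL map to dimension $O(\log n/\eps^2)$ and then precompute the projections $\langle u_i,p\rangle$ of all candidates. A query then pays $O(d)$ for the map and $\tilde O(n^{2/c^2})$ for the scans, at the cost of a $(1+\eps)$ factor. The JL map must itself be made robust. We handle this via the same uniform-over-net argument, since JL for all net points costs only $O(d\log n)$ dimensions, or via the CN20 distance estimator. Preprocessing is dominated by computing projections and sorting across all copies, which we expect to fit within the stated $\tilde O(d^2n^{1+1/c})$ bound.
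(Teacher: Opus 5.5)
Your architecture is the paper's: answer trivially far from the data, cover the remaining ball by a net of size $n^{O(d)}$, keep $k=\Theta(d\log n)$ independent copies of Indyk's structure so that (Chernoff plus a union bound) every net point is good for a constant fraction of copies, and at query time consult $O(\log n)$ copies chosen with fresh coins. The gap is the step that carries all the weight: transferring correctness from a net point $q'$ to a nearby query $q$. Indyk's success event at $q'$ has no margin. It asks for one pair with $a\cdot p^*-a\cdot q'\ge T:=t\|p^*-q'\|/c$, and at most $O(N)$ pairs clearing the \emph{same} threshold $T$ among points at distance below $\|p^*-q'\|/c$. Moving to $q$ shifts every projection gap by up to $\gamma=\max_a\|a\|\,\|q-q'\|$. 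So the good pair may drop to $T-\gamma$, while more than $8N$ pairs can overtake it: those whose gap at $q'$ lies in $[T-2\gamma,T)$, or whose distance ratio at $q'$ is just above $1/c$. The event at $q'$ says nothing about them, so there is no ``threshold gap'' unless you build one into the event. Running with $c/(1+\eps)$ does not fix this. It enlarges the outlier radius but keeps the good-projection and outlier thresholds equal, so the band just below the threshold remains uncontrolled. And for constant $\eps$ it raises $N$ to $n^{(1+\eps)^2/c^2}$, a polynomial loss. The missing idea is the two-sided slack of Definition~\ref{def-good}. The good projection must clear the raised threshold $t\|p^*-q\|(1+\delta)/c$, while outlier pairs are counted at the lowered threshold $t\|p^*-q\|(1-\delta)/c$ over the enlarged radius $(1+\delta)\|p^*-q\|/c$. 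Claim~\ref{l:Indyk} and Lemma~\ref{l:proj} show this still holds with probability $3/4$ for $N=n^{(1+O(\delta))/c^2}\sqrt{\log n}$, so $\delta=1/n$ costs only a constant factor. Lemma~\ref{def-good-neigh} then shows that $(c,\delta)$-goodness at $q'$ deterministically gives $(c,0)$-goodness at every $q$ with $\|q-q'\|\le\Delta/(n^3\sqrt d)$, provided all $\|a\|\le n\sqrt d$.

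The second variant's main route also fails. A JL map made correct for all net points via a union bound over $n^{O(d)}$ points needs target dimension $\Theta(d\log n/\eps^2)$, which exceeds $d$. Mapping $q$ then costs $\tilde O(d^2)$ and reduces nothing. Indeed, any linear map preserving $\|q-p\|$ for all $q$ in a ball around $p$ must be injective on $\R^d$. Moreover, even a non-robust $O(\log n/\eps^2)$-dimensional map would make your scans cost $\tilde O(n^{1/c^2})$, so that route cannot be the source of the $n^{2/c^2}$ term. The paper uses your fallback, made precise. In preprocessing, build the estimator of Lemma~\ref{l:CN} on the union of all stored top-$(8N+1)$ lists, which has $O(kN^2)$ points. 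At query time, do not compute $q$'s $N$ projections at all, since that is what costs $dN$. Instead, hand all $O(N^2)$ points stored in each of the $m$ sampled copies to the estimator and return the point with the largest estimate. These stored points contain the candidates of Lemma~\ref{l:base}, hence a $c$-AFN. This gives a $c(1+\eps)$-approximation in $\tilde O(N^2+d)=\tilde O(n^{2/c^2}+d)$ time with $\tilde O(d^2n^{2/c^2})$ space. Two smaller points. With $\tilde O(n^{1/c^2})$ projections, Indyk's per-query success probability is constant, not $1-1/\mathrm{poly}(n)$; this is why you need the constant-fraction Chernoff argument rather than amplification. And at query time you should return the maximum over the union of candidates, not a median.
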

    
Our result improves upon the $\tilde{O}(n + d)$ bound that can be achieved by combining the robust JL-like projection of \citet{CN20} with a linear scan over all the points for a wide range of natural parameters. For example, if $d = \text{poly}(\log n)$ then the first stated query time is sublinear in $n$ for all $c > 1$. If $c > \sqrt{2}$, then our second query time is sublinear in $n$ for all $d = o(n)$. 

We complement our algorithmic result with an adversarial attack against the  oblivious AFN data structure by Indyk~\cite{I03}, demonstrating it fails  under adaptive queries. Our attack exploits the underlying random projections by choosing the query point as a function of the projections. It artificially inflates certain projected distances to deceive the data structure into returning a point that is far from the true furthest neighbor, proving that  a single, oblivious data structure is insufficient.

\subsection{Techniques}
At the heart of our adversarially robust solution is a three-step robustification. It constitutes a generic recipe that may appeal to other problems as well; we present it referring to generic data structures that solve the problem at hand. 
\begin{enumerate}[leftmargin=*]
    \item \textbf{Base Data Structure}: An \textit{oblivious} data structure $D$ that provides a ``smooth'' success guarantee for an arbitrary query point with sufficient, say constant, probability. That is, if we are ensured that the data structure can answer $c$-approximately for a query point $q$, the quality of the approximation should degrade smoothly for a query point $q'$ that is close to $q$. 
    
    Because of this ``smooth'' requirement, we need to open up Indyk's algorithm \cite{I03}, based on random projections, to strengthen its guarantees. We introduce the concepts of \textit{good projections} and \textit{outlier projections} and use them to show that, for an arbitrary point $q$, one can achieve the same guarantees as Indyk's \textit{and} an additional \textit{slack guarantee}, with no asymptotic loss.\linebreak This slack guarantee will make neighbor points $q'$ of $q$ ``good'' deterministically, solely based on $q$ being good. 
    This  part appears in \cref{sec:base}. 

    \item \textbf{Robust Data Structure}: A single randomized data structure may succeed on an individual query but fail on a nearby one chosen adversarially. To obtain guarantees that hold simultaneously for all possible queries, we ensure correctness on a representative set of query points, whose correctness guarantees extend to nearby points. 
    
    We build a robust data structure $\mathcal{D} = D_1, \ldots, D_k$ composed of $k$ independent copies of $D$, each instantiated using fresh random bits. The parameter $k$ is chosen based on the success probability of $D$ for an arbitrary query point and, more importantly, the size of a fine covering of the \emph{query space}; so that the final success probability will allow for a union bound over this covering. 

    For \textit{value} problems such as approximate distance estimation, it suffices to cover a unit ball around each point \cite{CN20}. However, for \emph{search} problems such as furthest neighbor, the identity of the returned point may change abruptly with the query location, and no a-priori bound without a dependency on the scale (i.e., the spread ratio of the input dataset) is available. A possible workaround used for approximate nearest neighbor is to consider a weak decision problem instead, where an answer should be returned only if there is a neighbor within a fixed radius $r$ \cite{AHKO26}. Interestingly, none of these two approaches would work for AFN. In \cref{sec:robust}, we show how to construct a covering of the query space based only on the diameter of the dataset that yields a scale-free robustification. 
    
    \item \textbf{Efficient Querying}: To retrieve an answer, it is desirable not to query all of the $k$ base data structures. If a query can be answered correctly by an $\Omega(1)$-fraction of them, then querying $m = O(\log n)$ many (chosen uniformly at random) provides $m$ answers out of which one is correct with high probability. It remains to efficiently find a correct one among these. Since we get $\tilde O(n^{1/c^2})$ furthest neighbor candidates from our robust data structure, we can examine their distances in $\tilde O(d \cdot n^{1/c^2})$ time (which is our first guarantee). 
    
    Improving the dependency on $d$ in $\tilde O(d \cdot n^{1/c^2})$ requires a more careful approach than a classical JL-transform. This is because a JL projection is, to the best of our knowledge, not resilient to adaptive queries. Our insight to avoid the $d$-factor is to gather a ``small'' set of FN candidates and use the robust distance estimation framework of \cite{CN20} to approximate their distances in $\tilde O(n^{2/c^2} + d)$ time. This, to our knowledge, is one of the first examples of how a robust algorithm can be used as a black-box in another robust algorithm without worrying about the dependencies between input queries and intermediate output. This query process (and final algorithm) appears in \cref{sec-query}.
\end{enumerate}

We emphasize that the careful combination of the above steps is non trivial. 
It requires several new ideas to obtain an approximation quality and query time that match those of the best known oblivious algorithm, while providing robustness against an adaptive adversary \textit{and} no dependency on the number of queries that our data structures can bear. This contrasts recent robust algorithms based on differential-private methods that rely on query-size dependent bounds~\cite{AHKO26, dprobust, feng2025differential}. 

\noindent \textbf{Black-box vs. white-box adverserial models.}
Our algorithm in fact works even against an adversary who is allowed to see the \textit{internal randomness} of the algorithm. This is because we prove that, with high probability, we can correctly answer all queries, revealing the \textit{past} randomness does not hurt our success guarantee. In the literature, this is usually denoted as a \emph{white-box} adversary \cite{ABJ+22}, in contrast to a \emph{black-box} adversary which can only observe the outputted answers. Interestingly, this is the strongest guarantee one can hope for after determinism. In contrast, differential privacy-inspired approaches (see ~\cite{dprobust, feng2025differential} and references therein) not only require extra query time to periodically re-build their data structures, but they also fall apart when information about their randomness leaks. This proves a \textit{strict} separation between our framework and differential privacy-inspired frameworks. See also \cref{rel-work} for a more elaborate comparison.

\subsection{Related Work}
\label{rel-work}

\paragraph{Adversarial streaming model.}
Many works have studied linear sketches and streaming algorithms in adversarial settings, often under assumptions quite different from the query model considered here.
In the adversarial streaming model~\cite{alon2021adversarial,ben2022adversarially}, the input is a sequence of adaptive updates and the goal is to approximate, collect, or compute some statistic while using as little space as possible. This setting has received a lot of attention since its formalization, with many works showing how to robustify classical streaming algorithms and estimation problems (see e.g. \citet{braverman2021adversarial, BY20, kaplan2021separating, cohen2022robustness, BOS26, 9057588, 9719698, cohen2023tricking, 10756052, GLW+25, attias2024framework}, and references therein). Closely related formalization of adaptivity are the white-box streaming model \cite{adversarial, ABJ+22, feng2024fast}, intermediate adversarial models~\cite{sadigurschi2023relaxed}, and adaptivity under insertions and deletions in the dynamic setting \cite{beimel2022dynamic}.

\paragraph{Adaptive data structures and sketches.}
A major step towards formalizing adaptivity in geometric data structures was made
by~\citet{CN20}, who introduced
the \emph{adaptive query model}.
In this model, a randomized data structure must remain correct even when each
query is chosen by an adversary that observes all previous answers.
They showed that many standard Monte Carlo techniques fail under adaptivity,
and developed new methods for approximate distance estimation that remain valid
in this strong adversarial setting.

\paragraph{Nearest neighbor search.}
Nearest neighbor search (NNS) is a fundamental problem with extensive literature,
motivated by applications in computer vision, information retrieval, and
database systems~\cite{bingham2001random, shakhnarovich2008nearest, datar2004locality}.
Exact NNS suffers from prohibitive space complexity in high dimensions
(e.g.,~\citet{clarkson1988randomized, meiser1993point}), leading to a long line of work on
approximate nearest neighbor (ANN) search.
Foundational results such as~\cite{indyk1998approximate, kushilevitz1998efficient} achieve sublinear query time using
locality-sensitive hashing.

However, classical ANN data structures assume that queries are fixed in advance.
Under adaptive queries, their guarantees can fail.
Recent work on Las Vegas variants of ANN~\cite{Ahl17, wei2022optimal, pagh2016locality, sankowski2017approximate}
ensures correctness for all queries, but only guarantee efficient query time
when queries are non-adaptive.
Algorithms that explicitly support adaptive nearest neighbor queries
(e.g.,~\citet{kleinberg1997two, kushilevitz1998efficient}) achieve sublinear query time at the cost of
very large space usage, ranging from $\Omega(n^d)$ to
$\Omega(n^{O(1/\varepsilon^2)})$.

\paragraph{Distance estimation and nonparametric methods.}
Approximate distance estimation (ADE) is a more general primitive than nearest
neighbor search and plays a key role in nonparametric estimation.
Many learning algorithms, including kernel regression, support vector machines,
and density estimation, require distance information to a large fraction of the
dataset rather than to a single nearest neighbor
(e.g.,~\cite{wand1994kernel, hofmann2008kernel, altman1992introduction, simonoff2012smoothing, atkeson1997locally}).
In such settings, restricting attention to a small number of neighbors may be
insufficient, and the appropriate number of neighbors may depend on the query
point itself.

Cherapanamjeri and Nelson~\cite{CN20} developed the
first adaptive data structures for ADE, providing $(1+\varepsilon)$-approximate
distance estimates to all points with near-linear query time.
Their framework however does not yield sublinear-time
algorithms for extremal queries such as nearest or furthest neighbor search.

\paragraph{Furthest neighbor search.}
Compared to nearest neighbor search, furthest neighbor search has received
significantly less attention, despite its importance in many computational tasks.
In high dimensions, exact furthest neighbor search is computationally expensive,
and even approximate variants are challenging under adaptivity.
Most relevant to our work, is the oblivious $c$-approximate FN algorithm by \cite{I03} with $\tilde O(n^{1/c^2+d})$ query time\footnote{The $c$-approximation in \cite{I03} is achieved after using a $(1+\eps)$-JL transformation, thus the actual approximation factor is $(1+\eps)c$, unless the query time is increased to $\tilde O(n^{1/c^2}d)$. We make this distinction explicit in our results.} and the use of the ADE framework of~\cite{CN20} to answer adaptive furthest neighbor queries in
$\tilde{O}(n+d)$ time.
Whether sublinear query time is achievable for AFN under full adaptivity has remained an important open question.

\section{A Robust AFN Algorithm}\label{sec:algo}
\paragraph{Formal problem definition and model.} Let $P \subset \mathbb{R}^d$ be a fixed set of $n$ points. The $c$-approximate furthest neighbor problem is to maintain a data structure that, for any query $q \in \mathbb{R}^d$, returns a point $p \in P$ satisfying $\|p-q\| \ge \frac{1}{c} \max_{x \in P} \|x-q\|.$ 

In the \textit{adaptive adversarial} setting, an adversary chooses a sequence of queries $q_1, q_2, \ldots$ adaptively, where
each $q_t$ may depend on all previous answers returned. As intuition for why the adversarial setting is challenging, note that if the adversary can construct subsequent queries that exploit ``blind spots" (e.g., orthogonal directions) in the data structure, the runtime and approximation guarantees may fail to hold. We require that the algorithm answers all queries correctly with high probability over its internal randomness, regardless of the adaptive nature of the queries, and that its time and space bounds hold deterministically.

\begin{table}[h]
\centering
\setlength{\tabcolsep}{4pt} 
\begin{tabular}{lll}
\toprule
Parameter & Value & Meaning \\
\midrule
$c$ & arbitrary & approximation factor \\
$\epsilon$ & arbitrary & projection error\\
$\delta$ & $1/n$ & slack parameter \\
$\Delta$ & $\max_{p,p' \in P} \|p - p'\|$ & diameter \\
$t$ & $\Theta(\sqrt{\log n})$ & technical threshold\\
$k$ & $\tilde\Theta(d)$ & $\#$ of base DS\\
$N$ & $\tilde \Theta(n^{1/c^2})$& $\#$ of proj. per DS\\
$m$ & $\Theta(\lg n)$ & $\#$ of samples\\
\bottomrule
\end{tabular}
\caption{Algorithm parameters. }
\label{tab:values}
\end{table}

There are a number of global parameters that we will use in our algorithm; we provide ~\cref{tab:values} for ease of reference.

\subsection{Base Data Structure}\label{sec:base}
The fundamental units of our construction are random projections. Let $a$ be a $d$-dimensional vector with each coordinate chosen randomly and independently from a normal distribution $\mathcal N(0, 1)$. For a query $q$ and a dataset $P \in \R^{d\times n}$, if $p^* \in P$ is the furthest neighbor of $q$, the value $|a \cdot p^* - a \cdot q|$ is likely to be large, likewise, the projection of any other point $p' \in P$ that is closer to $q$ than $p^*$, $|a \cdot p' - a \cdot q|$,  is likely to be smaller. If this is the case, we say that the query $q$ is \textit{good} with respect to $a$. Concretely, our definition of ``goodness'' for a query point is more general: (1) it asserts the optimality of the solution that we can find for $q$ in terms of the final approximation factor $c$; and, (2) it introduces some \emph{slack} to ensure that if $q$ is good, so is any other query point $q'$ close to $q$. The parameter $\delta := 1/n$ measures the slack and will not affect results asymptotically.

\begin{definition}[$q$ is $(c, \delta)$-good]\label{def-good} Given dataset $P\in \mathbb{R}^{d\times n}$, a query point $q\in \R^d$, a set $A \in \R^{d\times N}$ of projection vectors, and a parameter $t \ge 1$, the point $q$ is $(c, \delta)$-good for $A$ if the following two properties hold for some $c > 1, \delta > 0$. Let $p^*$ be the furthest neighbor of $q$ in $P$. 
\begin{enumerate}
    \item There is a \emph{good projection} of the pair $(q,p^*)$, that is, there exists $a \in A$ such that $$a \cdot p^* - a \cdot q \ge t \|p^*-q\|\frac{1+\delta}{c}.$$
    \item The set of \emph{outlier projections} $B^{c, \delta}_{q,A}$ has size at most $8N$, where 
    \begin{multline*}
    B^{c,\delta}_{q,A}
    := \{ (p',a) \in P \times A :
    \frac{\|p'-q\|}{\|p^*-q\|} < \frac{1+\delta}{c}, \\
    a \cdot p' - a \cdot q
    \ge t \|p^*-q\| \frac{1-\delta}{c}
    \}
    \end{multline*}
\end{enumerate}
\end{definition}
Note that if a point is $(c, \delta)$-good, then it is also $(c,\delta')$-good for any $\delta' \le \delta$.

Our goal now is to construct a random projection matrix $A$ that ensures that, with constant probability, an arbitrary point $q$ is good for some suitable parameters, while minimizing the size of $A$.

We first state two known results about projecting a point onto a random Gaussian vector.
\begin{claim}[cf. Claims 2 and 3 of \cite{I03}]\label{l:Indyk}
    Given a random projection vector $a$ with $a_j \sim \mathcal{N}(0,1)$, a query $q$, and a pair of points $p$ and $p'$ such that $\frac{\lVert p'-q \rVert}{\lVert p-q \rVert} < \frac{1+\delta}{c}$ for any $c > 1, \delta \in (0, 1/2)$, then we have
    \begin{align}
        &\Pr_a\left[a \cdot p - a\cdot q \geq \frac{t \lVert p-q \rVert}{c} (1+\delta)\right] \geq \frac{1}{t} \cdot n^{-\frac{1+O(\delta)}{c^2}}, \label{c:1}\\
        &\Pr_a\left[a \cdot p'- a\cdot q \geq \frac{t \lVert p-q \rVert}{c} (1-\delta)\right] \leq \frac{1}{n}, \label{c:2}
    \end{align}
    for $t = \Theta(\sqrt{\log n})$ that is the solution to $e^{t^2\frac{(1-\delta)^2}{2(1+\delta)^2}}/t = 2n$.
\end{claim}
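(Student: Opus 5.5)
The proof rests on rotation invariance of the Gaussian: for any fixed vector $v$, $a\cdot v \sim \mathcal N(0,\|v\|^2)$. Write $X = a\cdot(p-q)/\|p-q\|$ and $Y = a\cdot(p'-q)/\|p'-q\|$; both are standard normal. Each inequality then reduces to a one-dimensional Gaussian tail bound. I will use the standard Mills-ratio estimates
\[
\frac{1}{\sqrt{2\pi}}\Big(\frac1x-\frac1{x^3}\Big)e^{-x^2/2}\;\le\;\Pr[Z\ge x]\;\le\;\frac{1}{\sqrt{2\pi}\,x}e^{-x^2/2}, \qquad Z\sim\mathcal N(0,1),\ x>0.
\]

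For \eqref{c:2}, write $r=\|p'-q\|/\|p-q\|$, so $r<(1+\delta)/c$. The event is $Y \ge x'$ with $x' = t(1-\delta)/(cr)$. Since $cr < 1+\delta$, we get $x' > t(1-\delta)/(1+\delta) =: s$. By monotonicity of the tail it suffices to bound $\Pr[Z\ge s] \le \frac{1}{\sqrt{2\pi}\,s}e^{-s^2/2}$. The defining equation of $t$ says exactly $e^{s^2/2}/t = 2n$, hence $e^{-s^2/2} = 1/(2nt)$. The bound becomes $\frac{1+\delta}{(1-\delta)\sqrt{2\pi}\,t^2}\cdot\frac{1}{2n}$, which is at most $1/n$ once $t$ exceeds a small constant; this holds for $n$ large, since $t=\Theta(\sqrt{\log n})$ and $\delta<1/2$. (If $r=0$ the event has probability zero.)

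For \eqref{c:1}, the event is $X\ge x$ with $x = t(1+\delta)/c$. The lower Mills bound gives probability at least $\Omega(1/x)\,e^{-x^2/2} \ge \Omega(1/t)\,e^{-t^2(1+\delta)^2/(2c^2)}$, using $x\le t(1+\delta)$ and $x\to\infty$ so the $1/x^3$ term is negligible. From the defining equation, $t^2 = \frac{2(1+\delta)^2}{(1-\delta)^2}\big(\ln(2n)+\ln t\big)$. Therefore
\[
\frac{t^2(1+\delta)^2}{2c^2} = \frac{(1+\delta)^4}{c^2(1-\delta)^2}\,\big(\ln 2n + \ln t\big) = \frac{1+O(\delta)}{c^2}\,\ln n + O\!\Big(\frac{\ln\ln n}{c^2}\Big),
\]
using $(1+\delta)^4/(1-\delta)^2 = 1+O(\delta)$ for $\delta<1/2$.

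This gives a lower bound of $\frac1t\, n^{-(1+O(\delta))/c^2}\cdot (\log n)^{-O(1/c^2)}$ up to constants. The main obstacle is the bookkeeping that absorbs the stray factors — the constant, the $\ln t$ term, and the $(\log n)^{O(1)}$ factor — into the claimed form. With the paper's choice $\delta=1/n$ the term $O(\delta)$ is tiny, so it cannot absorb a $\log\log n$ factor by itself. Hence I would state the bound up to polylogarithmic factors hidden in the $\tilde O$ notation, matching Indyk's Claims 2 and 3. Equivalently, one can read $O(\delta)$ as $O(\delta + \log\log n/\log n)$. Since only $N=\tilde\Theta(n^{1/c^2})$ projections are needed downstream, this loss is harmless. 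I would first check Indyk's exact formulation to confirm that this is the intended reading.
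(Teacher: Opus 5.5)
Your proposal is correct and follows the paper's proof essentially step for step: you reduce to a standard Gaussian by rotation invariance, then for \eqref{c:2} apply the upper tail bound at $s=t(1-\delta)/(1+\delta)$ together with $e^{-s^2/2}=1/(2nt)$, and for \eqref{c:1} apply the lower tail bound and rewrite $\lambda^2/2$ through the defining equation of $t$. The obstacle you flag in \eqref{c:1} is genuine and also sits in the paper's own argument, which drops the factor $\Theta(1)\cdot(2t)^{-(1+O(\delta))/c^2}$ behind an ``$\approx$'' (for fixed $c$ and $\delta=1/n$ the literal bound is off by a $(\log n)^{\Theta(1/c^2)}$ factor), so your reading---a polylogarithmic loss absorbed into $N=\tilde\Theta(n^{1/c^2})$---is the right one.
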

The construction of \cite{I03}, later refined by \cite{PSSS15}, uses a projection matrix of size $\Theta(n^{1/c^2} \sqrt{\log n})$ to ensure that a query $q$ is $(c, 0)$-good. We give a more general version of their result below.

\begin{lemma}\label{l:proj}
    For a query $q$, if $A \in \R^{d \times N}$ consists of $N$ independent random projection vectors with each coordinate distributed as $\mathcal N(0,1)$, with $N = \Theta(n^{(1+O(\delta))/c^2} \sqrt{\log n})$, then $q$ is $(c, \delta)$-good with probability at least $3/4$, for any desired $c, \delta \in (0, 1/2)$.
\end{lemma}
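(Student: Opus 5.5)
We show that each of the two properties in \cref{def-good} fails with probability at most $1/8$ and take a union bound, which gives success probability at least $3/4$. Fix the query $q$ and let $p^*$ be its furthest neighbor in $P$. The columns $a_1,\dots,a_N$ of $A$ are i.i.d.\ Gaussian vectors, so both properties reduce to statements about independent trials, each controlled by \cref{l:Indyk}.

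\emph{Property 1 (good projection).} Apply \eqref{c:1} with $p=p^*$. Its hypothesis only requires some $p'$ with ratio below $(1+\delta)/c$, and in any case \eqref{c:1} concerns only $p$. Each column $a_i$ then independently satisfies $a_i\cdot p^*-a_i\cdot q\ge t\|p^*-q\|(1+\delta)/c$ with probability at least $\rho:=\frac1t n^{-(1+O(\delta))/c^2}$. The probability that no column succeeds is therefore at most $(1-\rho)^N\le e^{-\rho N}$. Choose the constant in $N=\Theta(n^{(1+O(\delta))/c^2}\sqrt{\log n})$ so that $\rho N\ge \ln 8$. Since $t=\Theta(\sqrt{\log n})$, the $\sqrt{\log n}$ factor in $N$ exactly cancels the $1/t$ in $\rho$. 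The failure probability is then at most $1/8$.

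\emph{Property 2 (few outlier projections).} For each $p'\in P$ with $\|p'-q\|/\|p^*-q\|<(1+\delta)/c$ and each $a\in A$, \eqref{c:2} (with $p=p^*$) bounds the probability that $(p',a)\in B^{c,\delta}_{q,A}$ by $1/n$. There are at most $n$ such $p'$ and $N$ columns, so linearity of expectation gives $\E|B^{c,\delta}_{q,A}|\le nN\cdot\frac1n=N$. Markov's inequality then yields $\Pr[|B^{c,\delta}_{q,A}|>8N]\le 1/8$.

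The union bound over the two failure events gives failure probability at most $1/4$, so $q$ is $(c,\delta)$-good with probability at least $3/4$. The main point requiring care is matching the exponent: the $O(\delta)$ in the exponent of $N$ must be the same one appearing in \eqref{c:1}. One must also check that \eqref{c:1} applies to $p^*$ even when no point $p'$ meets the ratio condition. In that case Property 2 holds vacuously and \eqref{c:1} is a pure Gaussian tail bound for $p^*$ alone, so its hypothesis on $p'$ is not actually needed.
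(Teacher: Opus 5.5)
Your proof is correct and takes essentially the same approach as the paper: independence across the $N$ columns together with \eqref{c:1} bounds the failure of Property 1 by $(1-\rho)^N \le 1/8$, and linearity of expectation with \eqref{c:2} followed by Markov bounds the failure of Property 2 by $1/8$, with a union bound giving $3/4$. Your additional remarks (the $e^{-\rho N}$ step, the cancellation of $\sqrt{\log n}$ against $1/t$, and that \eqref{c:1} does not actually need a point $p'$) spell out details the paper leaves implicit.
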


\begin{proof}
    Observe that property (1) of Def. \ref{def-good} does \textit{not} hold with probability at most $$(1-1/t \cdot n^{-\frac{(1+O(\delta))}{c^2}})^{N} \le 1/8,$$ by Equation~\ref{c:1} and the choice of $N$. Moreover, from Equation~\ref{c:2}, the expected total number of outlier projections is at most $N$. By Markov's inequality, this quantity is greater than $8N$ with probability at most $1/8$. Observe that points at distance at least $\|q-p\|\frac{(1+\delta)}{c}$ do not hurt the probability of being good. Therefore, property (2) of Def. \ref{def-good} fails to hold with probability at most $1/8$. With probability at least $3/4$, we can guarantee that $q$ is $(c, \delta)$-good for such $A$.
\end{proof}

Let $\Delta$ be the diameter of $P$ (cf. Table \ref{tab:values}). Let $q$ be $(c, \delta)$-good for a fixed $A$. We now prove that if a query $q'$ is sufficiently close to $q$, then $q'$ is $c$-good; note we omit the parameter $\delta$ when a point is $(c,0)$-good.

\begin{lemma}[$q'$ has a good neighbor]\label{def-good-neigh} 
    Given a pair of query points $q, q' \in \mathbb{R}^d$ such that $\|q - q'\| \le \frac{\Delta}{n^3\sqrt{d}}$, if $q$ is $(c, \delta)$-good for $\delta \ge 1/n$ and a projection matrix $A$ such that for every $a \in A$, $\|a\| \le n\sqrt{d}$, then $q'$ is $(c, 0)$-good.
\end{lemma}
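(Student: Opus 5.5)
The plan is a perturbation argument. Write $\eta := \|q-q'\| \le \frac{\Delta}{n^3\sqrt d}$. Every quantity appearing in Definition~\ref{def-good} should move by at most a tiny additive amount when $q$ is replaced by $q'$. I will show these additive errors are absorbed by the multiplicative slack $1\pm\delta$ with $\delta \ge 1/n$.

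\textbf{Step 1 (basic estimates).} Three facts drive the argument.
\begin{enumerate}[leftmargin=*]
\item For every $a\in A$, Cauchy--Schwarz and $\|a\|\le n\sqrt d$ give $|a\cdot q - a\cdot q'| \le \|a\|\,\eta \le \Delta/n^2$.
\item For every $p\in P$, $\bigl|\|p-q\|-\|p-q'\|\bigr|\le \eta$.
\item Let $p^*$ be the furthest neighbor of $q$ and $p'^*$ that of $q'$. Then $\|p^*-q\| \ge \Delta/2$: if $x,y\in P$ realize the diameter, the triangle inequality gives $\max(\|x-q\|,\|y-q\|)\ge \Delta/2$. Likewise $\|p'^*-q'\|\ge\Delta/2$. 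Moreover the two furthest distances differ by at most $\eta$: $\bigl|\|p^*-q\|-\|p'^*-q'\|\bigr|\le\eta$.
\end{enumerate}
Consequently every additive error ($\eta$, or $\Delta/n^2$, possibly times $t$ or $c$) is at most $O(\Delta/n^2)\cdot t$. By contrast, the slack available is $t\|p^*-q\|\delta/c \ge t\Delta/(2cn)$. For $n$ large compared to $c$, the slack dominates.

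\textbf{Step 2 (property 2, outliers).} I will show $B^{c,0}_{q',A}\subseteq B^{c,\delta}_{q,A}$, which immediately gives $|B^{c,0}_{q',A}|\le 8N$. Take $(p',a)$ with $\|p'-q'\| < \|p'^*-q'\|/c$ and $a\cdot p'-a\cdot q' \ge t\|p'^*-q'\|/c$.

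For the distance condition:
\[
\|p'-q\|\le \|p'-q'\|+\eta < \frac{\|p^*-q\|+\eta}{c}+\eta \le \|p^*-q\|\frac{1+\delta}{c},
\]
where the last inequality uses $\eta(1+c)\le \delta\Delta/2 \le \delta\|p^*-q\|$.

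For the projection condition:
\[
a\cdot p'-a\cdot q \ge \frac{t(\|p^*-q\|-\eta)}{c}-\frac{\Delta}{n^2} \ge t\|p^*-q\|\frac{1-\delta}{c},
\]
again because $t\eta/c+\Delta/n^2 \le t\delta\Delta/(2c)$ for large $n$.

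\textbf{Step 3 (property 1, good projection).} Take the witness $a$ for $q$, so that $a\cdot p^*-a\cdot q\ge t\|p^*-q\|(1+\delta)/c$. Then
\[
a\cdot p^*-a\cdot q' \ge t\|p^*-q\|\frac{1+\delta}{c}-\frac{\Delta}{n^2} \ge t\,\frac{\|p'^*-q'\|}{c},
\]
using $\|p^*-q\|\ge \|p'^*-q'\|-\eta$ together with the same slack comparison as in Step 2. Moreover $\|p^*-q'\|\ge \|p'^*-q'\|-2\eta$, so $p^*$ is a furthest neighbor of $q'$ up to an additive $2\eta$.

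\textbf{Main obstacle.} The difficulty is that the furthest neighbor can change: $p'^*\ne p^*$ is possible, while Definition~\ref{def-good} asks for a good projection of the pair $(q',p'^*)$ itself. The bound above gives a good projection for the pair $(q',p^*)$, measured against the threshold $t\|p'^*-q'\|/c$. Whether this suffices depends on how the good projection is used later, where the candidate returned is $p^*$ and its distance to $q'$ is within $2\eta$ of optimal.

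I would first try to close this gap in one of two ways:
\begin{enumerate}[leftmargin=*]
\item Interpret $p^*$ in the definition as any point whose distance is within the negligible additive $2\eta$ of the maximum. With $\delta \ge 1/n$, this changes nothing asymptotically.
\item Otherwise, argue directly that $a\cdot p'^*$ is large for the same $a$. This route looks harder, since $a$ was chosen with no guarantee about $p'^*$.
\end{enumerate}
If neither works, I would strengthen property 1 of Definition~\ref{def-good} so that it holds for all near-furthest points. This should be achievable in Lemma~\ref{l:proj} at no asymptotic cost.
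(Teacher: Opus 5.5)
\textbf{Your argument is the paper's.} Your Steps 1--3 follow the paper's proof: it also proves property 2 through the inclusion $B^{c,0}_{q',A}\subseteq B^{c,\delta}_{q,A}$, using the same triangle-inequality and Cauchy--Schwarz perturbations. It also handles property 1 by transporting the witness pair $(p^*,a)$ of $q$, and arrives at exactly your bound $a\cdot p^*-a\cdot q'\ge t\|p'^*-q'\|/c$. The obstacle you isolate is genuine, and the paper does not resolve it. After this bound it simply declares property 1 for $q'$, tacitly letting the witness point be $p^*$ instead of the furthest neighbor $p'^*$ of $q'$.

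\textbf{The literal lemma is false, so option 2 cannot work.} Under the literal Definition~\ref{def-good} the lemma does not hold. Take $d=2$, $q=0$, $p_1=(1,0)$, $p_2=(0,1-\eta)$, and the remaining $n-2$ points on the $x_2$-axis within distance $1/4$ of the origin. Let $A=\{a\}$ with $a=(2t,0)$, $\delta=1/n$, and $q'=\eta(1,-1)/\sqrt{2}$ with $\eta=1/(2n^3)$. Then $p^*=p_1$, and $q$ is $(c,\delta)$-good: the witness is $(p_1,a)$, and since this is the only pair with positive projection, $|B^{c,\delta}_{q,A}|\le 1$. Yet $p_2$ is the furthest neighbor of $q'$, and $a\cdot p_2-a\cdot q'<0$.

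\textbf{Option 1 is the right repair.} Made precise, it reads: in property 1, require a good projection for some $p\in P$ with $\|p-q\|\ge \max_{x\in P}\|x-q\|/c$. Keep the threshold and the outlier set $B$ defined through the true maximum $\max_{x\in P}\|x-q\|$. Lemma~\ref{l:proj} supplies this with $p=p^*$. Your Step 3 supplies it at $q'$, because $\|p^*-q'\|\ge\|p'^*-q'\|-2\eta\ge\|p'^*-q'\|/c$. This is all that Lemma~\ref{l:base} uses. Suppose none of the top $8N+1$ pairs involved a point at distance at least $\|p'^*-q'\|/c$ from $q'$. Then either all of them would be outlier projections, giving more than $8N$, or one would lie below the threshold and be outranked by the far witness pair. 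Option 3 is unnecessary; done by a union bound over up to $n$ near-furthest points, it would also add a $\log n$ factor to $N$.
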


\begin{proof}
    Let $p^*$ be the furthest neighbor of $q$, and $\tilde p^*$ be the furthest neighbor of $q'$. We prove both properties in turn.
    \paragraph{Property 1 \emph{(Good Projection)}:}
    Let $p \in P$ and $a \in A$ be the point-vector pair satisfying property (1) for $q$. We show it satisfies property (1) for $q'$ with $\delta=0$.
    First, observe that
    \begin{align*}
        a \cdot p - a \cdot q' &= (a \cdot p - a \cdot q) + a \cdot (q - q') \\
        &\ge \frac{t \|p^*-q\|}{c}(1+\delta) - \|a\|\|q-q'\| \\
        &\ge \frac{t \|p^*-q\|}{c}(1+\delta) - 2 \frac{\|p^*-q\|}{n^2} \\
        &= \frac{t \|p^*-q\|}{c} \left( 1 + \delta - 2\frac{c}{t n^2} \right),
    \end{align*}
    where in the second inequality we used the fact that $\|a\| \le n\sqrt{d}$ and that $\|p^* - q\| \ge \Delta/2$, since $p^*$ is its furthest neighbor.
    We now relate $\|p^*-q\|$ with $\|\tilde p^* - q'\|$:
    \begin{align*}
        \|p^*-q\| &\ge \|\tilde p^*-q\| \\
                &\ge  \|\tilde p^*-q'\|  - \|q - q'\|\\
                &\ge \|\tilde p^*-q'\|(1 - 2/n^3).
    \end{align*}
    Noticing that $( 1 + \delta - 2c/(t n^2))(1 - 2/n^3) \ge 1$ for $\delta \ge 1/n$ yields the claimed bound.
    \paragraph{Property 2 \emph{(Outlier Projections)}}
    We show that $B^{c,0}_{q',A} \subseteq B^{c,\delta}_{q,A}$. Recall that a necessary condition for $p'$ to be an outlier for $q'$ is that $
        \frac{\|p' - q'\|}{\|\tilde p^* - q'\|} < \frac{1}{c}.$
    We first show that the above implies that
    \begin{equation*}
        \frac{\|p' - q\|}{\|p^* - q \|} \le \frac{(1+1/\delta)}{c}.
    \end{equation*}
    Thus, we have
    \begin{align*}
        \|q-p'\| &\le \|q-q'\| + \|p' - q'\|\\
        &<\|q-q'\| + \frac{\|\tilde p^* - q'\|}{c}\\
        &\le \|q-q'\|\left(1+\frac{1}{c}\right) + \frac{\|\tilde p^* - q\|}{c}\\
        &\le \left(1 + \frac{8}{n^3}\right)\frac{\|p^* - q\|}{c},
    \end{align*}
    which proves the claim since $\delta \ge 1/n$. 
    
    It remains to show that if $a \cdot p' - a \cdot q' \ge \frac{t \|\tilde p^*-q'\|}{c}$ holds, then $a \cdot p' - a \cdot q \ge \frac{t \|p^*-q\|}{c}$ also holds and, thus, $(a,p')$ must be an outlier for $q$. We have
    \begin{align*}
        a \cdot p' - a \cdot q &= (a \cdot p' - a \cdot q') + a \cdot (q' - q) \\
        &\ge \frac{t \|\tilde p^*-q'\|}{c} - \|a\|\|q'-q\| \\
        &\ge \frac{t \|p^*-q'\|}{c} - \|a\|\|q'-q\| \\
        &\ge \frac{t \|p^*-q\|}{c} - \|q'-q\|(\|a\| + t)\\
        &\ge \frac{t \|p^*-q\|}{c}\left(1 - \frac{2(\|a\| + t)}{n^3\sqrt{d}}\right),
    \end{align*}
    where the last term is greater than $(1-\delta)$.
\end{proof}

We are now ready to construct the oblivious data structure that our robust algorithm will rely on.

\begin{lemma}[Base Data Structure]\label{l:base}
    For a set of $n$ points $P$ in $\R^d$, a parameter $c > 1$, and a projection matrix $A \in \R^{d \times N}$, one can construct a data structure $D$ that when queried with a point $q$ that is $c$-good for $A$, it returns a set $S$ of $O(N)$ points that contain a $c$-furthest neighbor of $q$. The data structure $D$ has $O(d N\log n)$ query time, uses $\tilde O(dN + N^2)$ space, and takes $O(d N n\log n)$ time to be constructed.
\end{lemma}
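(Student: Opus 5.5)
The plan is to implement Indyk's projection-and-sort structure and read its correctness directly off Definition~\ref{def-good} with $\delta=0$.

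\emph{Preprocessing.} For each column $a_i$ of $A$ ($i=1,\dots,N$) compute $a_i\cdot p$ for all $p\in P$, in $O(dNn)$ time. Then sort $P$ by decreasing projection value, storing one sorted array per projection, in $O(Nn\log n)$ time. This is within the claimed $O(dNn\log n)$ preprocessing bound.

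The space bound $\tilde O(dN+N^2)$ forbids storing all $Nn$ sorted entries. So for each projection I keep only a top prefix whose length is bounded in terms of $N$. The goal is that the query walks never leave these prefixes. The prefix length is chosen so that it covers at most $8N+O(N)$ entries, i.e.\ the outlier budget plus the returned points. This gives $O(N^2)$ stored indices together with the $dN$ numbers of $A$.

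\emph{Query.} Compute $a_i\cdot q$ for every $i$, in $O(dN)$ time. Then run a merge over the $N$ lists with a max-heap keyed by $a_i\cdot p - a_i\cdot q$. Each list's current key is its largest remaining value $a_i\cdot p$ minus the constant $a_i\cdot q$. Pop elements in globally decreasing order of the key until $8N+1$ pairs $(p,a_i)$ have been extracted, and let $S$ be the set of distinct points among them. This uses $O(N)$ heap operations, each costing $O(\log N)\subseteq O(\log n)$, so the query time is $O(dN+N\log n)\subseteq O(dN\log n)$.

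\emph{Correctness.} Let $\tau = t\|p^*-q\|/c$. By property (1) with $\delta=0$, some pair $(p^*,a)$ has key at least $\tau$. Call a pair \emph{high} if its key is at least $\tau$. Each high pair is either an outlier pair, lying in $B^{c,0}_{q,A}$, or involves a point $p'$ with $\|p'-q\|\ge\|p^*-q\|/c$, which is already a $c$-approximate furthest neighbor.

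By property (2) there are at most $8N$ outlier pairs. Suppose the top $8N+1$ extracted pairs contain no $c$-approximate point. Then they are all outlier pairs, which is a contradiction. If the high pairs number fewer than $8N+1$, then all of them are extracted, including $(p^*,a)$. In either case $S$ contains a $c$-furthest neighbor.

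The same counting shows that each list is walked at most $8N+1$ steps, which justifies the $O(N)$ prefix truncation. This is what gives the $N^2$ space term.

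\emph{Main obstacle.} The step that needs care is the case analysis between ``extracted a good point'' and ``exhausted all high pairs.'' It must be argued without knowing $\tau$ at query time. The fixed extraction count $8N+1$ is what makes this work. A second point to check is that truncating to length $O(N)$ per list never cuts off a pair the merge would need. This holds because no list can contribute more pairs than the merge extracts in total.
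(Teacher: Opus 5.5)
Your proposal is correct and follows essentially the same route as the paper: per-projection sorted lists truncated to their top $8N+1$ entries, a max-heap merge keyed by $a_i\cdot p - a_i\cdot q$ that extracts $8N+1$ pairs, and the counting argument that combines property (1) (a good projection for $p^*$) with property (2) (at most $8N$ outlier pairs). You also spell out details the paper leaves implicit: the case split on whether at least $8N+1$ pairs have key $\ge t\|p^*-q\|/c$, the reason truncation is safe for every query (each list contributes only a prefix of at most $8N+1$ extracted pairs), and the $O(dN + N\log n)$ query-time accounting, which includes computing all $a_i\cdot q$.
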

\begin{proof}
    Note that $A$ consists of $N$ $d$-dimensional vectors $A = \{a_1,\dots,a_N\}$. We store the scalar projections $\{a_i \cdot p \mid p \in P\}_{i \in [N]}$ in $N$ sorted lists, $\mathcal L = \{L_1,\dots,L_N\}$. These lists can be constructed in time $O(N \cdot dn\log n))$ and, for each list, only the \textit{largest} $O(N)$ projections will be used, so the total space can be reduced to $\tilde O(dN + N^2)$. 
    
    Our approach is to take the largest $8N+1$ candidates in terms of $(a \cdot p - a \cdot q)$ across all sorted lists. Since lists are sorted by $a_i\cdot p$, we can first build a max-heap of $N$ elements, with the largest entry from each list and key-value $\langle a_i \cdot p - a_i \cdot q, i\rangle$. Then, we repeatedly pop the top element from the heap, and insert the next largest element from the same list (note that we stored such index in the value of the heap entry). Within $O(N\log n + d)$ time, we can find the top $8N+1$ candidates. Among them, we can show that there is a $c$-AFN using the fact that $q$ is $c$-good for $A$, i.e.
    \begin{equation*}
        \max_{p \in \mathcal L[8N+1]} \|q-p\| \ge \|q-p^*\|/c.
    \end{equation*}
    Recall that by property (1) of Def. \ref{def-good}, there is a good projection vector $a$ such that $a \cdot p^* - a\cdot q \ge t \|p^* - q\|/c$. Moreover, the number of outlier projections greater than $t \|p^* - q\|/c$ is at most $8N$ by property (2) of Def. \ref{def-good}. Therefore, among the $8N+1$ returned points there is at least one $p$ that is a $c$-AFN.
\end{proof}

\subsection{Robustification via Union Bound}\label{sec:robust}

A single random projection matrix offers limited guarantees. To handle adaptive queries, we construct a robust data structure that succeeds for \textit{all} possible queries with high probability. Let $D$ be the base data structure from Lemma \ref{l:base}. Our robust data structure $\mathcal{D}=D_1,\dots,D_k$ is composed of $k$ independent copies of $D$. To achieve a ``for-all'' guarantee, we will choose $k$ large enough to union bound over a suitable covering of the query space.

For a set of $n$ points $P$ in $\mathbb{R}^{d}$, we define the center of $P$ as $ct(P)$ and the width of a bounding box around $P$ as $bw(P)$ (see Definition \ref{df:bw_ct}). Without loss of generality, we can assume that $ct(P)=\mathbf{0}\in \mathbb{R}^d$. 

Let $R = \frac{(1+c)\sqrt{d}}{2(c-1)}bw(P)$. By the definition of $bw(P)$, we can upper and lower bound the distance between the query $q$ and any point $p \in P$ (cf. Lemma \ref{l:bw}). This will help us show that it suffices to restrict queries to a ball of radius $R$ centered at $ct(P)$, $B = B(ct(P),R)$: queries outside this ball can be trivially answered.

\begin{lemma}[trivial query]\label{l:radi_thre}
    For a set of $n$ points $P$ in $\R^d$ and any query $q$ such that $\|q-ct(P)\|\geq R$, we have
    \begin{equation*}
        \min_{p'\in P}\|p'-q\|\geq \frac{\max_{p\in P}\|p-q\|}{c}.
    \end{equation*}
\end{lemma}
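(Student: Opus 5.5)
The plan is to reduce the statement to two-sided distance bounds between $q$ and the points of $P$, expressed through $\|q\|$ and $bw(P)$. We work with $ct(P)=\mathbf{0}$. By the definition of the bounding box (Definition~\ref{df:bw_ct}), every $p\in P$ lies in the axis-aligned box of side length $bw(P)$ centred at the origin. Hence $\|p\| \le \frac{\sqrt d}{2}\,bw(P)$ for all $p\in P$; this is the content we expect Lemma~\ref{l:bw} to provide. Writing $r := \frac{\sqrt d}{2} bw(P)$, the triangle inequality gives, for every $p \in P$,
\begin{equation*}
\|q\| - r \;\le\; \|p-q\| \;\le\; \|q\| + r .
\end{equation*}

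Next we bound the ratio. Since $\min_{p'}\|p'-q\| \ge \|q\|-r$ and $\max_p \|p-q\|\le \|q\|+r$, it suffices to show $c(\|q\|-r) \ge \|q\|+r$. This is equivalent to $\|q\| \ge \frac{c+1}{c-1}\, r$. Substituting $r$, the threshold is exactly $\frac{(1+c)\sqrt d}{2(c-1)} bw(P) = R$. So whenever $\|q - ct(P)\|\ge R$ the inequality holds. We need $c>1$ here, both for positivity and so that dividing by $c-1$ keeps the direction of the inequality. Note that $\|q\|\ge R > r$ also guarantees that the lower bound $\|q\|-r$ is positive, so the manipulation is legitimate.

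The only real obstacle is pinning down precisely what $bw(P)$ and $ct(P)$ mean in Definition~\ref{df:bw_ct}. If $bw(P)$ is the maximal side length of the bounding box and $ct(P)$ is its centre, the radius bound $r=\frac{\sqrt d}{2}bw(P)$ is exactly right, and it is this $r$ that reproduces the constant in $R$. If instead $bw(P)$ were defined as half the side length, the constant would change by a factor of $2$. In that case we would simply use the bound from Lemma~\ref{l:bw} directly; the algebra of the second step is unchanged.
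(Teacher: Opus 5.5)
Your proposal is correct and follows the paper's proof: the two-sided bounds $\|q-ct(P)\| \mp \frac{\sqrt d}{2}bw(P)$ on $\|p-q\|$ (Lemma~\ref{l:bw}), followed by the observation that $\frac{\|q\|-r}{\|q\|+r}\ge \frac1c$ exactly when $\|q\|\ge \frac{c+1}{c-1}r = R$. Your reading of Definition~\ref{df:bw_ct} is also right: $bw(P)$ is the maximal side length of the bounding box and $ct(P)$ its centre, so the constant in $R$ matches with no factor-of-$2$ adjustment.
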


Our goal is now to prove that any other query inside of $B(0,R)$ is $c$-good for a constant fraction of our base data structures. For the purpose of the analysis, we build a grid $Q$ covering $B(0,R)$ with a suitably small spacing determined later (see Def. \ref{df:grid} for a formal definition of a grid). This ball clearly includes any query for which a trivial answer does not suffice (cf. Lemma \ref{l:radi_thre}). Our strategy is to ensure that a grid point is $(c, 1/\delta)$-good, so that any other point $q'$, which $q$ is sufficiently close to, will be $c$-good.

The next lemma establishes a key relationship between the no. of base data structures and the no. of grid points.

\begin{lemma}\label{l:grid}
    Let $\mathcal A = \{A_1, \ldots, A_k\}$ be a set of independent random projection matrices, each associated with a base data structure $D_i \in \mathcal D$ and consisting of $N = \Theta(n^{(1+O(\delta))/c^2} \sqrt{\log n})$ independent random projection vectors with coordinates $\sim \mathcal N(0,1)$. 
    For a set of points $Q$ in $\R^d$, \emph{every} point $q \in Q$ is $(c, \delta)$-good for at least $k/2$ many projection matrices with probability at least $1-1/n^3$ for $k = \Theta(\log( |Q| n))$ and any $c, \delta \in (0, 1/2)$.
\end{lemma}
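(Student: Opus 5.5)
We fix a point $q \in Q$ and view the $k$ matrices $A_1,\dots,A_k$ as independent trials. For each $i \in [k]$, let $X_i$ be the indicator that $q$ is $(c,\delta)$-good for $A_i$. Since each $A_i$ has $N = \Theta(n^{(1+O(\delta))/c^2}\sqrt{\log n})$ independent Gaussian columns, \cref{l:proj} gives $\Pr[X_i = 1] \ge 3/4$. The goodness of $q$ for $A_i$ depends only on $q$, $P$ and $A_i$, and $q$ is fixed in advance because $Q$ is a deterministic set chosen independently of the randomness. Hence the $X_i$ are independent Bernoulli variables, and $X = \sum_i X_i$ has $\E[X] \ge 3k/4$.

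Next we apply a Chernoff (or Hoeffding) bound to the event that $q$ is good for fewer than $k/2$ matrices. This event requires $X$ to fall below its mean by at least $k/4$. Hoeffding gives
\[
\Pr[X < k/2] \le \exp\left(-2k(1/4)^2\right) = e^{-k/8}.
\]

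Finally we take a union bound over all $q \in Q$. The probability that some grid point is good for fewer than $k/2$ matrices is at most $|Q|\, e^{-k/8}$. Choosing $k = 8\ln(|Q| n^3) = \Theta(\log(|Q| n))$ makes this at most $1/n^3$, which is the claimed statement.

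There is no real obstacle here. The only point needing care is that the success bound of \cref{l:proj} is per fixed query, which is why we union bound over the predetermined set $Q$ rather than over adaptive queries. We also need the same $c,\delta$ range as in \cref{l:proj}, which the statement assumes. Passing from the grid to arbitrary nearby queries is left to \cref{def-good-neigh} and is not part of this lemma.
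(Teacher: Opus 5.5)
Your proposal is correct and follows the paper's proof essentially step for step: independent indicators across the $A_i$ for a fixed $q$, the per-matrix success probability $3/4$ from Lemma~\ref{l:proj}, a Chernoff-type tail bound of $e^{-\Omega(k)}$, and a union bound over $Q$ with $k=\Theta(\log(|Q|n))$. Your Hoeffding computation giving $e^{-k/8}$ and the choice $k = 8\ln(|Q|n^3)$ simply make the paper's constants explicit.
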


\begin{proof}
    Let $A(q) = \{i \in [k] \mid q ~\text{is}~ (c,\delta)\text{-good for}~A_i\}$.
    We consider the following event:
    \begin{equation*}
        E=\{\exists q\in Q : |A(q)| < k/2\}.
    \end{equation*}
    and need to show that $\Pr[\bar{E}]\geq 1-1/n^3$. 

    Consider a fixed $q$ and let $\{X_i\}_{i\in[k]}$ be indicator random variables for the event that $q$ is $(c,\delta)\text{-good for}~A_i$. Let $X = \sum_i X_i$. By Lemma \ref{l:proj}, we know that $\Pr[X_i]\geq 3/4$ for any $i$, and thus $\E[X] \ge 3k/4$. Since the $A_i$'s are independent, we can apply a Chernoff bound to show that $X$ is sharply concentrated around its expectation:
    \begin{align*}
        \Pr\Big[X < k/2\Big] &\leq \Pr\Big[X < \Big(1-\frac{1}{3}\Big)\cdot \E[X]\Big]\\
        &= e^{-\Omega(k)}.
    \end{align*}
    By taking a union bound over the set $Q,
        \Pr[\bar{E}]\ge 1 - |Q|e^{-\Omega(k)} \ge 1 -1/n^3$, 
    for $k=\Theta(\log (n|Q|))$.
\end{proof}

Recall that Lemma \ref{def-good-neigh} establishes a connection between the $(c,\delta)$-goodness of a query point $q$ and the $c$-goodness of a neighbor query within distance at most $\Delta/(n^3\sqrt{d})$ in $P$. We use this connection to prove that all points in the query space covered by the grid $Q$ are $c$-good whp.

\begin{lemma}\label{l:grid_2}
    Let $\mathcal A = \{A_1, \ldots, A_k\}$ be projection matrices such that every vector $a \in A_i$ satisfies $\|a\| \le n\sqrt{d}$ for all $i \in [k]$. Let $Q = G_{\eta,r}$ be a grid over the ball $B(0,r)$ with spacing $\eta=\frac{bw(P)}{dn^3}$. 

    If every point $q \in Q$ is $(c, \delta)$-good for at least $k/2$ projection matrices, then every point $q' \in B(0,r)$ is also $c$-good for at least $k/2$ projection matrices, for $\delta \ge 1/n$ and any $c > 1$.
\end{lemma}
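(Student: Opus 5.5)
The argument reduces \cref{l:grid_2} to \cref{def-good-neigh}, applied once for each nearby grid point. Fix an arbitrary $q' \in B(0,r)$. The plan is to find a grid point $q \in Q$ with $\|q-q'\| \le \Delta/(n^3\sqrt d)$, and then transfer goodness from $q$ to $q'$ matrix by matrix.

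\textbf{Step 1 (covering radius).} Since $Q = G_{\eta,r}$ is a grid with spacing $\eta$ covering $B(0,r)$ (Def.~\ref{df:grid}), every $q' \in B(0,r)$ has a grid point $q$ within distance at most $\eta\sqrt d/2 \le \eta\sqrt d$. This is the usual half-diagonal of a cell; if the grid is defined by rounding each coordinate, this bound is immediate. With $\eta = bw(P)/(dn^3)$ this gives $\|q-q'\| \le bw(P)/(\sqrt d\, n^3)$.

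\textbf{Step 2 (relating $bw(P)$ to $\Delta$).} This is the step I expect to be the only real obstacle, since \cref{def-good-neigh} is phrased in terms of the diameter $\Delta$. I need $bw(P) \le \Delta$. If $bw(P)$ is the maximal side length of the axis-aligned bounding box (Def.~\ref{df:bw_ct}), this holds because any side length equals $|p_j - p'_j|$ for some $p,p' \in P$, and $|p_j - p'_j| \le \|p-p'\| \le \Delta$. If instead $bw$ is defined with some constant factor, I would absorb it into the slack of \cref{def-good-neigh}. That proof has room to spare: its error terms are of order $1/n^2$, while $\delta \ge 1/n$. Either way, $\|q-q'\| \le \Delta/(n^3\sqrt d)$, possibly up to a constant that the slack tolerates.

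\textbf{Step 3 (transfer).} By hypothesis $q$ is $(c,\delta)$-good for at least $k/2$ matrices $A_i$. For each such $A_i$, every $a\in A_i$ satisfies $\|a\|\le n\sqrt d$, and $\delta \ge 1/n$. So \cref{def-good-neigh} applies to the pair $(q,q')$ with matrix $A_i$, and shows $q'$ is $(c,0)$-good, i.e.\ $c$-good, for $A_i$. Hence the set of indices for which $q'$ is $c$-good contains the set for $q$, which has size at least $k/2$. Since $q'$ was arbitrary, the claim holds for all of $B(0,r)$.

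This last step is purely deterministic. The result of \cref{def-good-neigh} holds for any fixed matrix satisfying the norm bound, so there is no union bound over $q'$ and no dependence on the adversary.
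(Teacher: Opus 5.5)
Your proposal is correct and follows the paper's proof essentially verbatim: a grid point within $\eta\sqrt d$ (\cref{l:exist}), the bound $bw(P)\le\Delta$ to reach the hypothesis of \cref{def-good-neigh}, and then that lemma applied matrix by matrix. One minor caveat: $Q$ contains only the lattice points inside $B(0,r)$, so your half-diagonal bound $\eta\sqrt d/2$ can fail near the boundary of the ball, but the weaker bound $\eta\sqrt d$ that you actually use does hold, by rounding each coordinate toward zero as in \cref{l:exist}.
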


\begin{proof}[Proof of Lemma \ref{l:grid_2}]
    Consider an arbitrary point $q' \in B(0,r)$. If $q' \in Q$, the claim follows trivially. Else, let $q \in Q$ be the nearest grid point of $q'$. By Lemma \ref{l:exist} and the fact that $bw(P)\leq \Delta$, we have that $\|q-q'\|\leq \eta \sqrt{d} \le \frac{bw(P)}{n^3\sqrt{d}}\leq \frac{\Delta}{n^3\sqrt{d}}$. Therefore, we can apply Lemma \ref{def-good-neigh}, that is, if $q$ is $(c,\delta)$-good for a matrix $A$, then $q'$ is $c$-good for the same $A$, provided that $\delta \ge 1/n$. Since $q$ is $(c,\delta)$-good for at least $k/2$ many projection matrices, so is $q'$.    
\end{proof}

\subsection{Efficient Querying and Final Construction}\label{sec-query}
In this section, we first discuss how to query our $k$ independent base data structures efficiently, and then present our final algorithm.

The next lemma proves that querying $O(\log n)$ base data structures is enough to find an approximate answer with high probability.
\begin{lemma}\label{l:query_log}
    Let $m = \Theta(\log n)$ and $q$ be a query point that is $c$-good for at least $k/2$ matrices out of $A_1, \ldots, A_k$. Let $i_1, \ldots, i_m$ be indices sampled uniformly at random with repetitions from $\{1, \ldots, k\}$. With high probability, there is an index $i_j$ such that $q$ is $c$-good for $A_{i_j}$ for $i_j \in \{i_1, \ldots, i_m\}$.
\end{lemma}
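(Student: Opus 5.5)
The idea is to treat the $m$ sampled indices as independent trials, each with success probability at least $1/2$, and bound the probability that every trial fails. Let $G = \{i \in [k] : q \text{ is } c\text{-good for } A_i\}$. By hypothesis $|G| \ge k/2$, and $G$ is a fixed set once the matrices $A_1,\dots,A_k$ are fixed.

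The first step is to condition on the realization of $A_1,\dots,A_k$. Since the sampling of $i_1,\dots,i_m$ uses fresh randomness, this conditioning leaves the indices independent and uniform on $[k]$. For each $j \in [m]$ we then have
\[
\Pr[i_j \notin G] = 1 - \frac{|G|}{k} \le \frac{1}{2}.
\]

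The second step uses the independence of the draws (sampling with repetition) to get
\[
\Pr[\forall j \in [m]: i_j \notin G] = \prod_{j=1}^m \Pr[i_j \notin G] \le 2^{-m}.
\]
Choosing $m = \lceil C \log_2 n \rceil$ makes this failure probability at most $n^{-C}$, which is high probability for any desired constant $C$.

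The only point that needs care is adaptivity. The query $q$ may depend on $A_1,\dots,A_k$ and on earlier answers, so the argument requires that the indices $i_1,\dots,i_m$ be sampled freshly at query time, independently of $q$ and of everything the adversary has seen. Under that requirement, conditioning on $q$ and on the matrices leaves the indices uniform, and the bound above holds for each query. If the algorithm must handle a polynomial number of queries, a union bound over them changes $m$ only by a constant factor. I expect this independence issue to be the main subtlety; the calculation itself is routine.
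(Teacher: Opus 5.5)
Your proposal is correct and follows the paper's proof: both use $|G|\ge k/2$ to bound each draw's failure probability by $1/2$, then use independence of the with-replacement draws to get failure probability $2^{-m}$, which is inverse-polynomial for $m=\Theta(\log n)$. Your remark that the indices must be drawn with fresh randomness, independent of $q$, is the same point the paper makes when it applies this lemma in the proof of the main theorem.
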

\begin{proof} 
    Let $S = \{i \in [k] \mid q ~\text{is}~ c\text{-good for}~A_i\}$. By assumption, $|S|\geq k/2$. Thus, a uniformly random index from $\{1,\dots,k\}$ lies in $S$ with probability at least $1/2$. Because the indices $i_1,\dots,i_m$ are sampled uniformly at random with repetitions from $\{1,\dots,k\}$, the probability that none of them lies in $S$ is at most $(1/2)^m=2^{-m}$. Therefore, by choosing $m = \Theta(\log n)$, with high probability, $q$ is $c$-good for one of the matrices.
\end{proof}
We make use of the adaptive distance estimation framework by \citet{CN20} to efficiently process our candidate answers. 
    
\begin{lemma}\label{l:CN}[\cite{CN20, cherapanamjeri2022uniform}]
    Given a point set $P'$ of $n'$ points and a parameter $\eps > 0$, there is a robust data structure that given a query $q$ and a subset $\hat P \subseteq P'$, for every $x_i \in \hat P$ it returns an estimate distance $\tilde d_i$ such that
    \[
        (1 - \eps)\|q - x_i \| \le  \tilde d_i \le (1 +\eps)\|q -x_i \|.
    \]
    in $\tilde O(|\hat P|+d)$ query time. It uses $\tilde O(d(n'+d))$ space and $\tilde O(n'd)$ preprocessing time.
\end{lemma}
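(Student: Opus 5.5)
This lemma is a black-box import of the adaptive distance estimation (ADE) result of \cite{CN20}, in the form refined by \cite{cherapanamjeri2022uniform}. So the plan is to recall that construction and check that it supports queries restricted to a subset $\hat P$ in time $\tilde O(|\hat P| + d)$ rather than $\tilde O(n' + d)$.

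\textbf{Construction.} Following \cite{CN20}, we draw $L = \tilde\Theta(d)$ independent JL-type sketching matrices $\Pi_1,\dots,\Pi_L \in \R^{m\times d}$ with $m = O(\eps^{-2})$. For each $\ell$ and each $x_i \in P'$ we store $\Pi_\ell x_i$. Storing the matrices costs $\tilde O(d^2)$ space and the sketched points cost $\tilde O(n'd)$, for $\tilde O(d(n'+d))$ in total. The preprocessing time is $\tilde O(n'd)$ up to the matrix products; with the structured or sparse sketches used in \cite{cherapanamjeri2022uniform} these products fit the stated bound.

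\textbf{Query.} Given $q$, we sample $r = O(\log n')$ indices $\ell_1,\dots,\ell_r$ uniformly from $[L]$ and compute $\Pi_{\ell_j} q$ for each. Then, for each $x_i \in \hat P$ only, we output the median over $j$ of $\|\Pi_{\ell_j} q - \Pi_{\ell_j} x_i\|$. Sketching $q$ costs $\tilde O(d)$ with fast JL. Each estimate costs $O(r m) = \tilde O(1)$ and we compute only $|\hat P|$ of them, which gives the $\tilde O(|\hat P| + d)$ query time.

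\textbf{Correctness.} The key step is the for-all guarantee of \cite{CN20}. By scale invariance it suffices to consider unit vectors $v = (q - x_i)/\|q - x_i\|$. Using an $\eps$-net of the unit sphere of size $e^{O(d\log(1/\eps))}$, a Chernoff bound, and a union bound, one shows that with high probability, for every unit $v$ and every $i$, at least a $0.9$ fraction of the $L$ sketches approximate $\|v\|$ to within $1\pm\eps$. The net extends to all of the sphere through an operator-norm bound on the $\Pi_\ell$. Because this property holds simultaneously for all queries and is independent of the fresh sampling of $\ell_1,\dots,\ell_r$, the median over the sampled sketches is correct with high probability, regardless of adaptivity.

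Restricting to $\hat P$ only drops coordinates from the output, so correctness is inherited unchanged. This holds even if $\hat P$ is chosen adaptively as a function of $q$ and of earlier outputs, since the guarantee is uniform over all $x_i$ and $q$.

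\textbf{Main obstacle.} The main difficulty is ensuring that adaptivity across queries does not correlate the sampled indices with the adversary's choices. This is handled by fresh per-query sampling together with the uniform guarantee: failure on one query cannot leak information that breaks another, since with high probability there are no failures at all. The remaining parameter checks match \cite{CN20} and can be cited directly.
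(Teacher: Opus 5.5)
Your proposal takes essentially the paper's route. The paper gives no proof and imports the adaptive distance estimator of \cite{CN20, cherapanamjeri2022uniform} as a black box. Your reconstruction correctly identifies why the $\tilde O(|\hat P|+d)$ form holds: sketching $q$ under the $O(\log n')$ freshly sampled sketches costs $\tilde O(d)$, each median estimate costs $\tilde O(1)$, and the for-all guarantee over the preprocessing randomness makes an adaptively chosen $\hat P$ harmless.

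The one claim your construction does not itself deliver is the $\tilde O(n'd)$ preprocessing time. With $L=\tilde\Theta(d)$ independent sketches, computing all $\Pi_\ell x_i$ costs $\tilde\Theta(n'd^2)$. Independent sparse sketches do not help: for $v=e_j$ to be handled, most of the $L$ sketches must have a nonzero in column $j$, so each point still needs $\tilde\Omega(d^2)$ work. The $\tilde O(n'd)$ bound relies on correlated randomized-Hadamard sketches from \cite{cherapanamjeri2022uniform}. For those sketches your Chernoff-plus-net argument over independent $\Pi_\ell$ no longer applies, and proving the uniform guarantee in that correlated setting is the main content of that paper. So this part has to be taken as a citation, exactly as the paper does, rather than as a consequence of your construction.

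A minor point: $\|\Pi_\ell\|_{\mathrm{op}}$ is of order $\sqrt{d/m}$, so the net needs mesh about $\eps\sqrt{m/d}$. Its size is then $e^{O(d\log(d/\eps))}$ rather than $e^{O(d\log(1/\eps))}$, giving $L=\Theta(d\log(d/\eps))$, which is still $\tilde\Theta(d)$.
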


We are now ready to prove our main result.
\MainResult*

\begin{proof}
    We instantiate $k$ base data structures $\mathcal D = \{D_1, \ldots, D_k\}$, each defined by a random and independent projection matrix $A_i \in \R^{d\times N}$ with each coordinate distributed as $\mathcal N(0,1)$, for $k = \Theta(d \log (dn/(c-1))$ and $N = \Theta(n^{(1+O(\delta))/c^2} \sqrt{\log n})$. As a technical condition, we require that each vector $a$ satisfies $\|a\| < n\sqrt{d}$ for $a \in \cup_i A_i$. By the standard $\ell_2$ tail-bound, $\Pr[\|a\| \ge n\sqrt{d}] \le d e^{-n^2/2}$ (see e.g. \cite{F91}). Thus, assuming  $n = \Omega(\log d)$, with probability at least $1 - kNde^{-t^2/2} \gg 1 - 1/\text{poly}(n)$, this condition holds.  

    Let $n' :=  kN^2$. By Lemma \ref{l:base}, $\mathcal D$ can be constructed in $\tilde O(k dNn)$ time and uses $\tilde O(k N^2 + kdN)$ space. Note that there at most $n'$ points (out of all $n$ points) that can be return by $\mathcal D$ as furthest neighbor candidates. Therefore, we construct the robust distance estimator from Lemma \ref{l:CN} solely on this subset of $n'$ points, which takes $\tilde O(dn')$ preprocessing time, and $\tilde O(d^2 + dn')$ space. Plugging in the values of $k$ and $N$ yields the claimed preprocessing and space bounds. See Algorithm~\ref{alg-prep} for the pseudocode.

    \begin{algorithm}[htbp]
    \caption{Robust Furthest Neighbor Preprocessing}\label{alg-prep}
    \begin{algorithmic}[1]
    \State \textbf{Input:} $P \in \R^{d\times n}$, parameters $c$, $\varepsilon > 0$.
    \State $N = \Theta\!\left(n^{1/c^2+O(\delta)}\sqrt{\log n}\right), \delta = 1/n.$
    \State $k = \Theta\!\left(d \log\left(dn/(c-1)\right)\right).$
    \smallskip
    \State Initialize $\mathcal D = \{D_1, \ldots, D_k\}$ with associated matrices $A_1, A_2, \ldots, A_k$ where each $A_i \in \R^{d\times N}$, and each entry $a_{ij} \sim \mathcal{N}(0,1)$.
    \State $L_{ij} \gets \emptyset$ for $i \in[k], j\in [N]$.
    \smallskip
    \For{$i = 1$ to $k$}
        \For{$j = 1$ to $N$}
            \State $L_{ij} \gets \{\, a_{ij} \cdot p \mid p \in P  \,\}$.
            \State Retain only $8N+1$ largest projections in $L_{ij}$.
        \EndFor
        \State $L_i \gets \bigcup_{j=1}^{N} L_{ij}$
    \EndFor
    
    \State $\hat{P} \gets \{\, p \in P \,|\, a_{ij} \cdot p \in L_{i,j} \text{ for some } i \in [k], j \in [N]\}$
    \State Build data structure $\mathcal{S}$ on $\hat{P}$ using Lemma~\ref{l:CN}.
    \end{algorithmic}
    \end{algorithm}

    Next, we show correctness for \textit{all} queries with high probability. That is, we prove that all queries are $c$-good for at least $k/2$ base data structures with high probability. Recall that $R=\frac{(1+c)\sqrt{d}}{2(c-1)}bw(P)$. Consider first a query $q^{out}$ outside of $B(0,R)$. Here, we have $\|q^{out}\|>\frac{(1+c)\sqrt{d}}{2(c-1)}bw(P)$. Therefore, by Lemma \ref{l:radi_thre}, $q^{out}$ is $c$-good for any $A$. 
    
    Consider now query points in $B(0,R)$. Let $Q = G_{\eta,r}$ be a grid over the ball $B(0,R)$ with spacing $\eta=bw(P)/dn^3$. By Lemma~\ref{l:exist},  the number of points in $Q$ is bounded by $O((R/\eta)^d)=O((d^{1.5}n^3/(c-1))^d)$. Thus, by our choice of $k$ and Lemma \ref{l:grid}, every $q\in Q$ is $(c,\delta)$-good for at least $k/2$ projection matrices with probability at least $1-1/n^3$. Thus, each query point in $B(0,R)$ is $c$-good by Lemma \ref{l:grid_2}. 

    \begin{algorithm}[htbp]
    \caption{Furthest Neighbor Query}\label{alg-query}
    \begin{algorithmic}[1]
    \State\textbf{Input:} Query point $q \in \R^d$
    \State $m = \Theta(\log n)$.
    \State Choose random indices $I = \{i_1, \ldots, i_m\} \subseteq [k]^m$. 
    \State $C \gets \emptyset$ \Comment{Furthest neighbor candidates}
    \smallskip
    \For{each $i \in I$}
        \State $C_i \gets$ $8N+1$ largest projections in $L_i$ sorted by $a_{ij} \cdot p - a_{ij} \cdot q$ for $j \in [N]$.
    \EndFor
    \State $C \gets \bigcup_{i \in I} C_i$.
    \State Use $\mathcal{S}$ to find $\hat{p}$ such that $
    \|\hat{p} - q\| \ge \frac{\max_{p \in C} \|p - q\|}{1 + \varepsilon}$
    \State \Return $\hat{p}$
    \end{algorithmic}
    \end{algorithm}                                                                                                 
    When a query $q$ comes, we can apply Lemma \ref{l:query_log} to pick $m = \Theta(\log n)$ random base data structures to answer: with high probability, among these, there is at least one projection matrix for which $q$ is $c$-good. Here, it is important to note that, since we use fresh randomness to choose $m$ indices, this process is not affected by the adaptivity of the adversary. Now there are two ways to proceed. First, we can project the query point to each of the $m$ data structures and then use the guarantees of each of these base data structures from Lemma \ref{l:base}. This results in the query time of $\tilde{O}(dn^{1/c^2})$ and returns a $c$ approximation. Algorithm~\ref{alg-query} illustrates this process.

    Otherwise, we know that each of the $m$ base data structures only have $O(N^2)$ points stored. Let $C$ be this candidate subset. We  use the data structure from Lemma \ref{l:CN} to estimate all distances between $C$ and $q$ up to a $(1+\eps)$-factor with high probability, and then pick $\hat p$ such that $\|\hat p - q\| \ge \max_{p \in C} \frac{\|p - q\|}{1+\eps}.$ Since our $c$-good guarantee ensures that $C$ contains a $c$-approximate neighbor by Lemma~\ref{l:base}, the final approximation $(1+\eps)c$ follows, and the query time is $\tilde{O}(\min\{N^2,n\} + d)$. Since $\delta = 1/n$, the term $n^{O(\delta)} = O(1)$ and the claimed bounds follow. 
\end{proof}

\section{Adaptive query against oblivious algorithms}
\label{sec:counter_example_piotr}
In this section, we devise an adversarial attack against the oblivious AFN data structure by \cite{I03}, which uses random projections to find an AFN. Our attack works even in the setting where no changes to the underlying dataset are allowed, but only adaptive queries. 

We start by recalling the underlying principle behind the base data structure, and why it fails. It relies on the fact that for a vector $v$ satisfying $\|v\|_2=1$, if we take its inner product with a random vector $a$ sampled from the standard Gaussian distribution $\mathcal N(0, 1)$, then
there exists $t \in \Theta(\sqrt{\log N})$ such that the following hold:
\begin{enumerate}
    \item $\vecdot{a}{v}$ is at least $t/c$ with probability at least $n^{-1/c^2}/t$.
    \item $\vecdot{a}{v}$ is more than $t$ with probability at most $1/n.$
\end{enumerate}
We can show that these properties break when the choice of $v$ depends on $a$.
Let $v$ be in the same direction as $a$, i.e., $v=a/\|a\|_2$.
One can show that
$\Pr[\|a\|_2 \le \sqrt{d}/2] \le e^{-9nd/64} = e^{-\Theta(nd)}$.
Assuming $\|a\|_2 \ge \sqrt{d}/2$,
the inner product $\vecdot{a}{v}$ is at least
$\sqrt{d}/2$ which is larger than $t$
for $d \ge \omega(\sqrt{\log n})$. As such, this construction breaks the main  principle behind the algorithm.

We next describe the specific algorithm for which we show a failure result.
The algorithm instantiates $N$ random vectors $a_1, \dots, a_N$, each sampled from $\mathcal N(0, 1)$, and calculates inner products $\vecdot{a_i}{q-p}$ for all $i \in N, p \in P$. It sorts these inner products by absolute value and, for the largest (in absolute value) $O(N)$ inner products, it checks the distance between $q$ and the $p$ corresponding to the inner product. Finally, it outputs the point $p$ maximizing the distance among these candidates.

\begin{definition}[Oblivious Algorithm]\label{def:ce:oblivious_alg}
    Given query $q \in \R^d$, dataset $P \subset \R^d$ of size $n$, and projection
    vectors $a_1, \dots, a_N \sim \mathcal{N}(0, I_d)$, the \emph{oblivious algorithm}
    computes $\abs{\vecdot{a_i}{q - p}}$ for all $i \in [N]$, $p \in P$, selects the
    $c_N \cdot N$ pairs $(i, p)$ achieving the largest values (for some absolute constant
    $c_N$), and returns $\arg\max_{p \in S} \norm{q - p}$, where $S$ is the set of points with the largest values. This procedure is expected to return a $(1+\eps)$-AFN with \emph{constant} probability.
\end{definition}

The goal of this section is to prove the following result. 

\begin{lemma}\label{cor:ce:alg_fails}
    Consider the dataset $P$ defined above (consisting of $n/2$ copies each of $p_-$
    and $p_+$). Let $c_N$ be the constant from Definition \ref{def:ce:oblivious_alg}.
    Suppose $n \ge 2c_N N$ and $\log N \le d^{0.02}/C$ for a sufficiently
    large absolute constant $C$. With probability at least
    $1 - e^{-\Theta(d^{0.02})} - 1/N$ over $a_1, \dots, a_N$, the oblivious algorithm
    (Definition \ref{def:ce:oblivious_alg}) on query $q$ returns a point $\hat{p}$ with
    $\norm{q - \hat{p}} \le d^{0.01}$,
    while $\max_{p \in P}\norm{q - p} \ge d^{0.5}$. In particular, $\hat{p}$ is not a
    $c$-approximate furthest neighbor for any $c \le d^{0.49}/2$.
\end{lemma}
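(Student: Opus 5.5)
Since the dataset and query are only referred to as ``defined above'', I will fix a concrete instance. Take $p_+=\mathbf 0$ and $p_-=d^{0.5}e$ for a fixed unit vector $e$ chosen independently of the $a_i$. Let $a_1^{\perp}$ be the component of $a_1$ orthogonal to $e$, set $u:=a_1^\perp/\norm{a_1^\perp}$, and let $g_j:=\vecdot{a_j}{e}$. The adaptive query is
\[
q:=p_- - \sigma\, d^{0.01} u, \qquad \sigma:=\mathrm{sign}(g_1)\in\{\pm1\}.
\]
So $q$ depends on $a_1$, in the spirit of the attack $v=a/\norm{a}$ described earlier. The distance claims are then deterministic. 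First, $\norm{q-p_-}=d^{0.01}$. Second, since $u\perp e$, $\norm{q-p_+}^2=d+d^{0.02}\ge d$. The returned point will be $\hat p=p_-$, so $\norm{q-\hat p}\le d^{0.01}$, and the final sentence follows because $d^{0.5}/d^{0.01}=d^{0.49}>d^{0.49}/2$.

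The core step is to show that the selected set $S$ contains only copies of $p_-$. The vector $a_1$ gives every one of the $n/2$ copies of $p_-$ the value
\[
\abs{\vecdot{a_1}{q-p_-}}=d^{0.01}\vecdot{a_1}{u}=d^{0.01}\norm{a_1^\perp}.
\]
Because $n/2\ge c_N N$, it suffices that this value strictly exceeds $\abs{\vecdot{a_j}{q-p_+}}=\abs{\vecdot{a_j}{q}}$ for every $j\in[N]$. Then all top $c_N N$ pairs are pairs $(1,p_-)$, whatever happens with the other $p_-$ pairs, and hence $S=\{p_-\}$. I would establish three high-probability events.

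\begin{enumerate}
\item $\norm{a_1^\perp}\ge \sqrt d/2$. This is a $\chi^2_{d-1}$ lower tail, failing with probability $e^{-\Theta(d)}$. It makes the $p_-$ value at least $d^{0.51}/2$.
\item $\abs{g_j}\le 2\sqrt{\log N}$ for all $j$. This is a union bound of Gaussian tails and fails with probability at most $1/N$.
\item $\abs{\vecdot{a_j}{u}}\le 2\sqrt{\log N}$ for all $j\ge2$. Here $u$ depends only on $a_1$, so each $\vecdot{a_j}{u}$ is standard Gaussian, and the same union bound applies.
\end{enumerate}

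Under these events, for $j\ge 2$ we get
\[
\abs{\vecdot{a_j}{q}}\le d^{0.5}\cdot 2\sqrt{\log N}+d^{0.01}\cdot 2\sqrt{\log N}.
\]
This is below $d^{0.51}/2$ exactly when $\sqrt{\log N}\lesssim d^{0.01}$, that is, when $\log N\le d^{0.02}/C$, which is the stated hypothesis. For the deadly index $j=1$, the sign choice gives
\[
\vecdot{a_1}{q}=\sigma\bigl(d^{0.5}\abs{g_1}-d^{0.01}\norm{a_1^\perp}\bigr),
\]
using that $\sigma^2=1$ and $\vecdot{a_1}{u}=\norm{a_1^\perp}$. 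Writing $x=d^{0.5}\abs{g_1}$ and $y=d^{0.01}\norm{a_1^\perp}$, we have $\abs{x-y}<y$ whenever $0< x<2y$. This holds by the same bound on $\abs{g_1}$, except for the measure-zero event $g_1=0$. So $q-p_+$ loses to $q-p_-$ on $a_1$ as well.

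The main obstacle is precisely this $j=1$ comparison. The same vector $a_1$ that inflates the $p_-$ projection also inflates the $p_+$ projection, and a naive choice $q=p_-+d^{0.01}a_1/\norm{a_1}$ could let $p_+$ win the top slot. Choosing the sign $\sigma$ to cancel against $g_1$, and orthogonalizing $u$ to $e$, are the two tweaks I rely on. If the paper's actual $P$ and $q$ differ, I would adapt the same three-event argument. The claimed failure probability $e^{-\Theta(d^{0.02})}$ presumably arises from a cruder concentration step, and it is dominated by my bounds above.
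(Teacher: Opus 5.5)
Your argument has the same structure as the paper's. The paper also shows that, on a good event, the common value $x\norm{a_1}$ of the $n/2\ge c_N N$ pairs $(a_1,p_-)$ strictly exceeds $\abs{\vecdot{q-p_+}{a_i}}$ for every $i\in[N]$, so $p_+$ never enters $S$. Its failure events are the same three as yours: a lower bound on $\norm{a_1}$; a Gaussian tail plus union bound for $i>1$, using that $a_i$ is independent of $q$; and a condition that makes the $i=1$ comparison work after a sign choice.

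The one thing to fix is that the lemma is about the paper's specific instance, not one you choose. There $p_\pm=\pm\vec 1$ and $q=-\vec 1+xyv$, with $x=d^{0.01}$, $v=a_1/\norm{a_1}$ \emph{not} orthogonalized, and $y=\mathrm{sgn}\vecdot{a_1}{p_+-p_-}$. This $y$ equals your $-\sigma$ when $e$ points along $p_--p_+$.

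Your proof transfers to that instance without new ideas:
\begin{itemize}
\item \textbf{Distance to $p_+$.} Because $\norm{p_+-p_-}=2\sqrt d$, the reverse triangle inequality already gives $\norm{q-p_+}\ge 2\sqrt d-x>\sqrt d$. So the orthogonalization is not needed there. You need it only because you placed $p_\pm$ at distance exactly $\sqrt d$, which leaves no slack.
\item \textbf{The $i=1$ comparison.} It becomes $0<\abs{\vecdot{p_+-p_-}{v}}<2x$. The paper gets the upper bound from L\'evy's lemma on the sphere, and this event is the only source of the $e^{-\Theta(d^{0.02})}$ term. Your route works just as well: a Gaussian tail for $\vecdot{\vec 1}{a_1}$ combined with $\norm{a_1}\ge\sqrt d/2$. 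It folds this event into the $O(1/N)$ union bound.
\item \textbf{The case $i>1$.} The paper treats $\vecdot{q-p_+}{a_i}\sim\mathcal N(0,\norm{q-p_+}^2)$, with $\norm{q-p_+}\le 2\sqrt d+x$, as a single Gaussian rather than splitting it in two.
\end{itemize}

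Your explicit exclusion of the null event $g_1=0$ is more careful than Lemma~\ref{lem:ce:proj_comparison}. Its strict inequality tacitly needs $\vecdot{p_+-p_-}{v}\neq 0$.
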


We extend the idea behind the failure of the algorithm's principle to show that the algorithm itself fails as well.
Let the dataset $P$
consist of $n/2$ copies of the point $p_- := -\vec{1} = (-1, \dots, -1)$ and $n/2$ copies of $p_{+} := \vec{1} = (1, \dots, 1)$.
Note that
$\norm{p_+-p_-}_2 = 2\sqrt{d}$.
Let 
$a_1, \dots, a_N \sim N(0, I_d)$ denote
the
sampled vectors of the algorithm and
set
$v=a_1/\|a_1\|$.  Set $q := p_- + xyv = -\vec{1} + xyv$, where $y := \mathrm{sgn}(\vecdot{a_1}{p_+-p_-})$ and $x > 0$ to be chosen. This is the adversarially constructed query point. 

The proofs of lemmas used in this section are in Section \ref{sec:appendix_adv}.

\begin{lemma}\label{lem:ce:main}
    Set $x = d^{0.01}$.
    Provided $\log N \le d^{0.02}/C$ for a sufficiently large absolute constant $C$,
    with failure probability $e^{-\Theta(d^{0.02})} + 1/N$, we have
    the following:
    \begin{align*}
        &\norm{q-p_-} = d^{0.01}, \\
        & \norm{q-p_+} > d^{0.5}, and \\
        & \abs{\vecdot{q-p_-}{a_1}} > \abs{\vecdot{q-p_+}{a_i}}
        \text{ for all $i \in [N]$}.
    \end{align*}
\end{lemma}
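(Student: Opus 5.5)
The proof is direct computation once we fix notation. Write $v=a_1/\norm{a_1}$, $u := p_+-p_- = 2\vec 1$, and $q = p_- + xyv$ with $x=d^{0.01}$. The first claim is immediate: $\norm{q-p_-} = x\norm{v} = d^{0.01}$. For the second, $q-p_+ = xyv - u$, so by the triangle inequality $\norm{q-p_+}\ge \norm{u}-x = 2\sqrt d - d^{0.01} > d^{0.5}$ for $d$ larger than an absolute constant. Neither of these two claims needs any randomness.

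The third claim carries the content. On the left, $\abs{\vecdot{q-p_-}{a_1}} = x\norm{a_1}$. Standard $\chi^2$ concentration (Laurent--Massart, or the tail bound the paper cites as \cite{F91}) gives $\norm{a_1}\ge \sqrt d/2$ with probability $1-e^{-\Theta(d)}$. So the left side is at least $d^{0.51}/2$.

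On the right, $\vecdot{q-p_+}{a_i} = xy\vecdot{v}{a_i} - \vecdot{u}{a_i}$, and we treat $i=1$ and $i\ne 1$ separately.
\begin{itemize}
\item For $i=1$: the expression is $xy\norm{a_1} - \vecdot{u}{a_1}$. Since $y=\mathrm{sgn}(\vecdot{a_1}{u})$, we have $y\vecdot{u}{a_1} = \abs{\vecdot{u}{a_1}}$. Hence the expression equals $y\bigl(x\norm{a_1} - \abs{\vecdot{u}{a_1}}\bigr)$, whose absolute value is strictly less than $x\norm{a_1}$ whenever $0<\abs{\vecdot{u}{a_1}}< 2x\norm{a_1}$. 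Now $\vecdot{u}{a_1}\sim\mathcal N(0,4d)$, so $\abs{\vecdot{u}{a_1}}\le 2\sqrt d\, d^{0.01}$ with probability $1-e^{-\Theta(d^{0.02})}$. This is below $2x\norm{a_1}\ge d^{0.51}$ after adjusting constants, for instance by requiring $\abs{\vecdot{u}{a_1}}\le d^{0.51}/2$, which still fails only with probability $e^{-\Theta(d^{0.02})}$. The quantity is nonzero almost surely. The sign choice of $y$ is exactly what makes this cancellation work.
\item For $i\ne 1$: condition on $a_1$, so $v$ is a fixed unit vector independent of $a_i$. Then $\vecdot{v}{a_i}\sim\mathcal N(0,1)$ and $\vecdot{u}{a_i}\sim\mathcal N(0,4d)$. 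A union bound over $i\le N$ gives $\max_i\abs{\vecdot{v}{a_i}}\le O(\sqrt{\log N})$ with failure probability at most $1/N$ (by choosing the constant so that each Gaussian exceeds $\sqrt{4\log N}$ with probability $N^{-2}$). Likewise $\max_i\abs{\vecdot{u}{a_i}}\le 2\sqrt d\cdot d^{0.01}/4$ with failure probability $N e^{-\Theta(d^{0.02})}$, which is $e^{-\Theta(d^{0.02})}$ because $\log N\le d^{0.02}/C$. Together these give $\abs{\vecdot{q-p_+}{a_i}}\le d^{0.01}O(\sqrt{\log N}) + d^{0.51}/2$.
\end{itemize}

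The main obstacle is the $i\ne1$ bound, because the $\vecdot{u}{a_i}$ term has the same order $d^{0.51}$ as the left side $x\norm{a_1}$. The constants therefore have to be arranged so that the left side wins. I would handle this by using the sharper estimate $\norm{a_1}\ge(1-o(1))\sqrt d$, which also holds with probability $1-e^{-\Theta(d)}$, so the left side is about $d^{0.51}$. I would then require $\abs{\vecdot{u}{a_i}}\le d^{0.51}/2$, which is a deviation of $d^{0.01}/4$ standard deviations, failing with probability $e^{-\Theta(d^{0.02})}$ per $i$. Finally, the term $d^{0.01}\sqrt{\log N}$ is at most $d^{0.02}$, which is $o(d^{0.51})$ since $\log N\le d^{0.02}/C$. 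Summing all failure probabilities gives $e^{-\Theta(d^{0.02})}+1/N$.
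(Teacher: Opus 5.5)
Your proof is correct and has the same skeleton as the paper's. The first two claims are deterministic; your reverse triangle inequality is the paper's $(2\sqrt d-x)^2$ bound. The case $i=1$ uses the sign of $y$ to reduce everything to $0<\abs{\vecdot{u}{a_1}}<2x\norm{a_1}$. The case $i>1$ uses independence of $a_i$ from $q$ plus a Gaussian union bound against $x\norm{a_1}\gtrsim d^{0.51}$.

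The differences are bookkeeping.
\begin{itemize}
\item For $i=1$, the paper controls $\abs{\vecdot{p_+-p_-}{v}}$ directly by L\'evy's lemma on the sphere. You instead bound $\vecdot{u}{a_1}\sim\mathcal N(0,4d)$ and $\norm{a_1}$ separately. Both cost $e^{-\Theta(d^{0.02})}$, and yours avoids spherical concentration.
\item For $i>1$, the paper does not split $q-p_+$. Conditioned on $a_1$, $\vecdot{q-p_+}{a_i}\sim\mathcal N(0,\norm{q-p_+}^2)$ with $\norm{q-p_+}\le 2\sqrt d+x$. At $t=\Theta(\sqrt{\log N})$ standard deviations this is at most $3t\sqrt d$, with total failure probability $1/N$. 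The inequality $3t\sqrt d<d^{0.51}/2$ is exactly where $\log N\le d^{0.02}/C$ is used.
\end{itemize}

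The ``main obstacle'' you identify comes only from thresholding $\vecdot{u}{a_i}$ at $d^{0.51}/2$. That is $d^{0.01}/4$ standard deviations, whereas $O(\sqrt{\log N})\le O(d^{0.01}/\sqrt C)$ standard deviations suffice. Had you thresholded at $t\cdot 2\sqrt d$, the cruder bound $\norm{a_1}\ge\sqrt d/2$ would already have been enough.

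Your repair is valid, with one imprecision. The bound $\norm{a_1}\ge(1-o(1))\sqrt d$ does not hold with probability $1-e^{-\Theta(d)}$ for an arbitrary $o(1)$. You only need a fixed constant above $1/2$, e.g.\ $\norm{a_1}\ge\tfrac34\sqrt d$, and that does hold with probability $1-e^{-\Theta(d)}$.
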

\begin{proof}
    The three failure events are: $F_1 := \{\norm{a_1} < \sqrt{d}/2\}$,
    $F_4 := \{\abs{\vecdot{p_+-p_-}{v}} > d^{0.01}\}$, and
    $F_C := \{\max_{i>1}\abs{\vecdot{q-p_+}{a_i}} > t(2\sqrt{d}+x)\}$.
    By Lemmas~\ref{lem:ce:norm_a1} and  \ref{lem:ce:inner_prod_bound}, and Corollary~\ref{cor:ce:max_proj}, $\Pr[F_1 \cup F_4 \cup F_C] \le e^{-\Theta(d)} + e^{-\Theta(d^{0.02})} + 1/N = e^{-\Theta(d^{0.02})} + 1/N$.
    We work on $\overline{F_1} \cap \overline{F_4} \cap \overline{F_C}$.

    \paragraph{First and second conditions.}
    $\norm{q-p_-} = \norm{xyv} = x = d^{0.01}$.
    Since $x = d^{0.01} < \sqrt{d}$ for all $d \ge 2$, Lemma~\ref{lem:ce:q_far} gives $\norm{q-p_+} > \sqrt{d} = d^{0.5}$.

    \paragraph{Third condition, $i = 1$.}
    On $\overline{F_4}$, $\abs{\vecdot{p_+-p_-}{v}} \le d^{0.01} = x$,
    so $x > \tfrac{1}{2}\abs{\vecdot{p_+-p_-}{v}}$.
    Lemma~\ref{lem:ce:proj_comparison} therefore gives $\abs{\vecdot{q-p_+}{a_1}} < \abs{\vecdot{q-p_-}{a_1}}$.

    \paragraph{Third condition, $i > 1$.}
    On $\overline{F_1}$, Lemma~\ref{lem:ce:inner_prod_pm} gives $\abs{\vecdot{q-p_-}{a_1}} = x\norm{a_1} \ge x\sqrt{d}/2 = d^{0.51}/2$.
    On $\overline{F_C}$, $\abs{\vecdot{q-p_+}{a_i}} \le t(2\sqrt{d}+x) \le 3t\sqrt{d}$
    for $x = d^{0.01} \le \sqrt{d}$ and large $d$.
    The condition $d^{0.51}/2 > 3t\sqrt{d}$ simplifies to $d^{0.01} > 6t = \Theta(\sqrt{\log N})$,
    which holds for $d$ large enough relative to $\log N$.
\end{proof}

\begin{proof}[Proof of Lemma~\ref{cor:ce:alg_fails}]
    By Lemma~\ref{lem:ce:main}, with probability
    $\ge 1 - e^{-\Theta(d^{0.02})} - 1/N$), we have
    $\norm{q - p_-} = x = d^{0.01}$,
    $\norm{q - p_+} \ge d^{0.49} \cdot d^{0.01} = d^{0.5}$,
    and $\abs{\vecdot{q - p_-}{a_1}} > \abs{\vecdot{q - p_+}{a_i}}$ for all $i \in [N]$.
    The dataset contains $n/2 \ge c_N N$ identical copies of $p_-$; each paired with
    $a_1$ yields inner product $x\norm{a_1}$, which exceeds every $(a_i, p_+)$ value.
    Hence all $c_N N$ candidate slots are occupied by copies of $p_-$, and $p_+$ is
    never added to $S$. The algorithm returns $\hat{p} = p_-$ at distance $d^{0.01}$,
    a factor $d^{0.49}$ short of the true furthest-neighbor distance $d^{0.5}$.
\end{proof}

\section*{Acknowledgements}
The work is partially supported by DARPA expMath, ONR MURI 2024 award on Algorithms, Learning, and Game Theory, Army-Research Laboratory (ARL) Grant W911NF2410052, NSF AF:Small grants 2218678, 2114269, 2347322.

\section*{Impact Statement}
This paper presents work whose goal is to advance the field
of Machine Learning. There are many potential societal
consequences of our work, none which we feel must be
specifically highlighted here.

\bibliography{ref}
\bibliographystyle{icml2026}

\newpage
\appendix
\onecolumn

\section{Omitted details and proofs in Section \ref{sec:base}}\label{sec:appendix1}

\begin{proof}[Proof of Lemma \ref{l:Indyk}]
    We utilize the well known standard Gaussian tail bounds~\cite{F91}. For a standard normal variable $Z \sim N(0,1)$ and $x \ge 0$,
    \begin{equation} \label{eq:tail_bounds}
        \frac{B e^{-x^2/2}}{x} \le \Pr[Z \ge x] \le \frac{e^{-x^2/2}}{x}.
    \end{equation}
    
    \paragraph{Proof of Equation \eqref{c:2} } 
    Let $\Delta' = p' - q$. By assumption, we have $\lVert \Delta' \rVert < \lVert p-q \rVert \frac{1+\delta}{c}$. The normalized projection $Z = \frac{a \cdot \Delta'}{\lVert \Delta' \rVert}$ is distributed as $\mathcal N(0,1)$. We analyze the probability that the projection exceeds the threshold $T_{near} = \frac{t \lVert p-q \rVert}{c}(1-\delta)$:
    \begin{align*}
        \Pr_a\left[a \cdot \Delta' \ge T_{near}\right] &= \Pr\left[Z \ge \frac{T_{near}}{\lVert \Delta' \rVert}\right] \\
        &< \Pr\left[Z \ge \frac{T_{near} c}{(1+\delta) \lVert p-q \rVert}\right]\\
        &= \Pr\left[Z \ge t(1-\delta)/(1+\delta)\right].
    \end{align*}
    Using the upper tail bound from \eqref{eq:tail_bounds} and the definition of $t$, the claim follows.

    \paragraph{Proof of Equation \eqref{c:1} }
    Let $\Delta = p - q$. We analyze the probability that the projection exceeds $T_{far} = \frac{t \lVert p-q \rVert}{c}(1+\delta)$:
    \[
        \Pr_a\left[a \cdot \Delta \ge T_{far}\right] = \Pr\left[Z \ge \frac{T_{far}}{\lVert p-q \rVert}\right] = \Pr\left[Z \ge \lambda\right],
    \]
    for $\lambda = \frac{t(1+\delta)}{c}$. Using the lower tail bound from \eqref{eq:tail_bounds}:
    \[
        \Pr[Z \ge \lambda] \ge \frac{B}{\lambda} e^{-\lambda^2/2} \approx \frac{1}{t} e^{-\lambda^2/2}.
    \]
    To relate this to $n$, we substitute the exponent derived from the definition of $t$, that is,
    \[
        e^{-\lambda^2/2} = \left( e^{-\frac{t^2(1-\delta)^2}{2(1+\delta)^2}} \right)^{\frac{\lambda^2(1+\delta)^2}{t^2(1-\delta)^2}} = \left( \frac{1}{2nt} \right)^{\frac{\lambda^2(1+\delta)^4}{(1-\delta)^2 c^2}}.
    \]
    Noting that $(1+\delta)^4/(1+\delta)^2 = (1+O(\delta))$ for $\delta \le 1/2$ yields the claim.
\end{proof}

\section{Omitted details and proofs in Section \ref{sec:robust}}\label{sec:appendix2}

\begin{definition}\label{df:bw_ct}
    For a set of $n$ points $P\subset \mathbb{R}^{d}$, we define the box width of $P$ as
    \begin{equation*}
        bw(P)=\max_{i=1,\dots,d}|\max_{p\in P} p_i-\min_{p\in P} p_i |,
    \end{equation*}
    and define the center of $P$ as $ct(P)$ such that
    \begin{equation*}
        ct(P)_i=\frac{1}{2}(\max_{p\in P}p_i+\min_{p\in P}p_i),
    \end{equation*}
    where $p_i$ is the $i$-th coordinate of $p$. 
\end{definition}

\begin{lemma}\label{l:bw}
    Given a set of $n$ points $P\subset \mathbb{R}^{d}$. For any $p\in P$ and any query $q\in \mathbb{R}^{d}$, we have
    \begin{enumerate}
        \item $\|p-q\|\geq \|q-ct(P)\|-\frac{\sqrt{d}}{2}bw(P)$,
        \item $\|p-q\|\leq \|q-ct(P)\|+\frac{\sqrt{d}}{2}bw(P)$.
    \end{enumerate}
\end{lemma}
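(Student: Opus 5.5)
The plan is to reduce both inequalities to a coordinatewise statement about the bounding box of $P$, and then apply the triangle inequality once in each direction.

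First I will show that every point of $P$ lies within distance $\frac{\sqrt d}{2}bw(P)$ of $ct(P)$. Fix $p\in P$ and a coordinate $i$. By Definition \ref{df:bw_ct}, $p_i$ lies in the interval $[\min_{p'\in P}p'_i,\ \max_{p'\in P}p'_i]$, and $ct(P)_i$ is the midpoint of that interval. Hence
\[
|p_i-ct(P)_i|\le \tfrac12\Bigl(\max_{p'\in P}p'_i-\min_{p'\in P}p'_i\Bigr)\le \tfrac12 bw(P),
\]
where the last step uses that $bw(P)$ is the maximum of these interval lengths over all coordinates. Squaring and summing over the $d$ coordinates gives $\|p-ct(P)\|^2\le d\,bw(P)^2/4$, so $\|p-ct(P)\|\le\frac{\sqrt d}{2}bw(P)$.

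With this bound in hand, both items follow from the triangle inequality. For item 1, $\|q-ct(P)\|\le\|q-p\|+\|p-ct(P)\|\le \|p-q\|+\frac{\sqrt d}{2}bw(P)$, and rearranging gives the lower bound. For item 2, $\|p-q\|\le\|q-ct(P)\|+\|ct(P)-p\|\le \|q-ct(P)\|+\frac{\sqrt d}{2}bw(P)$.

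There is no real obstacle here. The only point needing care is the first step: the box is centered at $ct(P)$, so each coordinate deviation is at most half the box width, and passing from $\ell_\infty$ to $\ell_2$ costs the factor $\sqrt d$.
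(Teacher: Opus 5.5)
Your proposal is correct and follows the paper's proof. Both bound each coordinate deviation $|p_i-ct(P)_i|$ by $\frac12 bw(P)$, deduce $\|p-ct(P)\|\le\frac{\sqrt d}{2}bw(P)$, and then apply the triangle inequality in each direction to get the two items.
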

\begin{proof}
    For any $p\in P$, we have $|p_i-ct(P)_i|\leq \frac{1}{2}|\max_{p\in P}p_i-\min_{p\in P}p_i|\leq \frac{1}{2}bw(P)$. So we get $\|p-ct(P)\|=\sqrt{\sum_{i=1}^d |p_i-ct(P)_i|^2}\leq \frac{\sqrt{d}}{2}bw(P)$. By the triangle inequality, we have
    \begin{align*}
        \|p-q\|&\leq \|q-ct(P)\|+\|p-ct(P)\|\\
        &\leq \|q-ct(P)\|+\frac{\sqrt{d}}{2}bw(P)
    \end{align*}
    and 
    \begin{align*}
        \|p-q\|&\geq \|q-ct(P)\|-\|p-ct(P)\|\\
        &\geq \|q-ct(P)\|-\frac{\sqrt{d}}{2}bw(P).
    \end{align*}
\end{proof}

\begin{proof}[\textbf{Proof of Lemma \ref{l:radi_thre}}]
    By Lemma \ref{l:bw}, we have 
    \begin{enumerate}
        \item $\min_{p'\in P}\|p'-q\|\geq \|q-ct(P)\|-\frac{\sqrt{d}}{2}bw(P)$,
        \item $\max_{p\in P}\|p-q\|\leq \|q-ct(P)\|+\frac{\sqrt{d}}{2}bw(P)$.
    \end{enumerate}
    Since $\|q-ct(P)\|\geq \frac{(1+c)\sqrt{d}}{2(c-1)}bw(P)$, then we have
    \begin{equation*}
        \frac{\min_{p'\in P}\|p'-q\|}{\max_{p\in P}\|p-q\|}\geq \frac{\|q-ct(P)\|-\frac{\sqrt{d}}{2}bw(P)}{\|q-ct(P)\|+\frac{\sqrt{d}}{2}bw(P)}\geq \frac{1}{c}.
    \end{equation*}
\end{proof}

\begin{definition}\label{df:grid}
    Given a parameter $\eta>0$ and a feasible set $\mathcal{X}\subset \mathbb{R}^d$. Define the grid covering over $\mathcal{X}$ with spacing $\eta$ as
    \begin{equation*}
        G_{\eta,\mathcal{X}}=\{g\in \mathcal{X}|g=\eta\cdot x,x\in \mathbb{Z}^d\}.
    \end{equation*}
\end{definition}

\begin{lemma}\label{l:exist}
    Given a parameter $\eta>0$ and a ball $B=B(0,r)\subset \mathbb{R}^d$. Let $G_{\eta,B}$ be the grid covering over $B$ with spacing $\eta$. For any $q\in B$, there exists a point $g\in G_{\eta,B}$ such that $\|g-q\|\leq \sqrt{d}\eta$. Moreover, its size is bounded by
    \[
        |G_{\eta,B}| \le \left( \frac{2R}{\eta} + 1 \right)^d.
    \]
\end{lemma}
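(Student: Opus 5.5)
We prove the two claims of Lemma~\ref{l:exist} separately, both by elementary coordinate-wise arguments on the lattice $\eta\mathbb{Z}^d$.

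\textbf{Covering property.} Fix $q\in B=B(0,r)$. The natural candidate is coordinate-wise rounding: let $g$ have coordinates $g_i=\eta\lfloor q_i/\eta\rceil$. Then $|g_i-q_i|\le \eta/2$, so $\|g-q\|\le \sqrt{d}\,\eta/2$. The difficulty is that $g$ must lie in $B$, and rounding away from the origin can push $g$ just outside the ball. To fix this, round each coordinate \emph{toward zero} instead: set $g_i=\eta\,\mathrm{sgn}(q_i)\lfloor |q_i|/\eta\rfloor$. Then $|g_i|\le |q_i|$ for every $i$, so $\|g\|\le\|q\|\le r$ and $g\in G_{\eta,B}$. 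Also $|g_i-q_i|<\eta$ for every $i$, which gives $\|g-q\|\le\sqrt{d}\,\eta$, exactly the stated bound. The factor-two loss compared to nearest rounding is precisely what buys membership in $B$, so this is the only step needing care.

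\textbf{Size bound.} Every $g\in G_{\eta,B}$ satisfies $|g_i|\le\|g\|\le r$ for each $i$. So $g_i=\eta x_i$ with $x_i$ an integer in $[-r/\eta,\,r/\eta]$, and there are at most $2r/\eta+1$ such integers. Taking the product over the $d$ coordinates gives $|G_{\eta,B}|\le(2r/\eta+1)^d$, which is the stated bound with $R=r$; in the paper's application $r=R$. No step here is a real obstacle: the proof is just the inclusion of the ball in its bounding cube.
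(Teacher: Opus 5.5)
Your proposal is correct and follows the paper's own proof: the paper also rounds each coordinate toward zero ($\lfloor q_i/\eta\rfloor$ for $q_i\ge 0$, $\lceil q_i/\eta\rceil$ for $q_i<0$), which keeps $g$ inside the ball while giving $\|g-q\|\le\sqrt{d}\,\eta$. It then bounds $|G_{\eta,B}|$ by counting the integers $x_i$ with $|x_i|\le r/\eta$ in each coordinate. You also correctly read the statement's mixed use of $r$ and $R$ as the same radius.
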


\begin{proof}
    Let $q=(q_1,\dots,q_d)$. We find $g=\eta \cdot x$ as follows:
    \begin{equation*}
        x_i=\begin{cases}
            \lfloor q_i/\eta \rfloor \quad \text{if }q_i\geq 0\\
            \lceil q_i/\eta \rceil \quad \text{if }q_i<0\\
        \end{cases}.
    \end{equation*}
    We verify the properties of $g$ as follows: First, we have $\eta\cdot|x_i|\leq |q_i|$. So $\|g\|\leq \|q\|\leq r$, which implies $g\in B$.
    Next, we have $|\eta\cdot x_i-q_i|\leq \eta$. Thus, $\|g-q\|\leq \sqrt{d}\eta$.

    It remains to prove the cardinality bound.
    For any grid point $g = \eta \cdot x \in G_{\eta, B}$, we have
    \[
    \|g\| \le R \quad \Rightarrow \quad |x_i| \le \frac{R}{\eta} \quad \text{for all } i \in [d].
    \]
    Thus each coordinate $x_i$ can take at most $\lfloor 2R/\eta \rfloor + 1$ integer values.
    By independence across dimensions,
    \[
    |G_{\eta, B(0,R)}| \le \left(\lfloor 2R/\eta \rfloor + 1\right)^d
    \le \left(\frac{2R}{\eta} + 1\right)^d.
    \]

\end{proof}

\section{Omitted Details from Section \ref{sec:counter_example_piotr}}\label{sec:appendix_adv}

\begin{lemma}\label{lem:ce:norm_a1}
    $\prob{\norm{a_1} \le \sqrt{d}/2} < e^{-\Theta(d)}$.
\end{lemma}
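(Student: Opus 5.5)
The claim is a lower-tail bound for the norm of a standard Gaussian vector $a_1 \sim \mathcal N(0, I_d)$: we need $\Pr[\|a_1\| \le \sqrt{d}/2] < e^{-\Theta(d)}$. The plan is to pass to the squared norm $\|a_1\|^2 = \sum_{j=1}^d a_{1j}^2$, which is a $\chi^2_d$ random variable with mean $d$. The event $\|a_1\| \le \sqrt{d}/2$ is exactly the event $\|a_1\|^2 \le d/4$, so it asks for the sum to fall below a quarter of its mean. That is a constant-factor deviation, and a standard Chernoff-type argument for sums of i.i.d. sub-exponential variables gives an $e^{-\Omega(d)}$ bound.

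\textbf{Step 1 (exponential moment).} For $\lambda > 0$, Markov's inequality applied to $e^{-\lambda \|a_1\|^2}$ gives
\[
\Pr\big[\|a_1\|^2 \le d/4\big] \le e^{\lambda d/4}\, \E\big[e^{-\lambda \|a_1\|^2}\big] = e^{\lambda d/4} (1+2\lambda)^{-d/2}.
\]
This uses independence of the coordinates together with the Gaussian identity $\E[e^{-\lambda Z^2}] = (1+2\lambda)^{-1/2}$ for $Z \sim \mathcal N(0,1)$.

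\textbf{Step 2 (optimize $\lambda$).} Choose $\lambda = 3/2$. The bound becomes
\[
\exp\!\Big(d\big(\tfrac{3}{8} - \tfrac12 \ln 4\big)\Big) = \exp\!\big(d(0.375 - 0.693)\big) \le e^{-0.31 d},
\]
which is $e^{-\Theta(d)}$. The upper side of $\Theta$ is only needed in the sense that the exponent is linear in $d$.

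\textbf{Alternative and strictness.} The same bound follows directly from the Laurent--Massart inequality $\Pr[\chi^2_d \le d - 2\sqrt{dx}] \le e^{-x}$ with $x = 9d/64$. This also recovers the constant $e^{-9d/64}$ quoted in Section~\ref{sec:counter_example_piotr}. The strict inequality in the statement is harmless, since the computed constant leaves slack.

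There is no real obstacle here. The only care needed is getting the sign conventions right in the lower-tail Chernoff step, namely using $e^{-\lambda X}$ rather than $e^{\lambda X}$.
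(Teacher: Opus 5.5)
Your proof is correct and essentially identical to the paper's. Both apply Markov's inequality to $e^{-s\|a_1\|^2}$, use the Gaussian identity $(1+2s)^{-d/2}$ for the moment generating function, and take $s = 3/2$ to get $e^{-d(\ln 2 - 3/8)} \le e^{-0.31 d}$; your Laurent--Massart remark is a valid extra check but is not needed.
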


\begin{proof}[Proof of Lemma \ref{lem:ce:norm_a1}]
    We have $\norm{a_1}^2 \sim \chi^2(d)$ with $\mathbb{E}[\norm{a_1}^2] = d$.
    By the Chernoff bound via the moment generating function: for any $s > 0$,
    \[
        \Pr\!\left[\norm{a_1}^2 \le \tfrac{d}{4}\right]
        \;\le\; e^{sd/4}\,\mathbb{E}\!\left[e^{-s\norm{a_1}^2}\right]
        \;=\; e^{sd/4}(1+2s)^{-d/2}.
    \]
    Setting $s = 3/2$, the bound becomes $e^{3d/8}\cdot 4^{-d/2} = e^{-d(\ln 2 - 3/8)}$.
    Since $\ln 2 - 3/8 > 0.31 > 0$, we get $\Pr[\norm{a_1} \le \sqrt{d}/2] \le e^{-\Theta(d)}$.
\end{proof}

\begin{lemma}\label{lem:ce:q_far}
    If $x < \sqrt{d}$ then
    $\norm{q - p_+} > \sqrt{d}$.
\end{lemma}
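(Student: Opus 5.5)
Write $q - p_+ = (q - p_-) - (p_+ - p_-) = xyv - (p_+ - p_-)$ and expand the squared norm directly. Since $\norm{v} = 1$ and $\norm{p_+ - p_-}^2 = 4d$, we get
\[
\norm{q-p_+}^2 = x^2 - 2xy\vecdot{v}{p_+-p_-} + 4d .
\]
The plan is to lower bound this quantity by something strictly larger than $d$, using only $x<\sqrt{d}$ and no probabilistic information about $a_1$.

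The key step is the sign of the cross term. By construction $y = \mathrm{sgn}(\vecdot{a_1}{p_+-p_-})$ and $v = a_1/\norm{a_1}$ is a positive multiple of $a_1$. Hence $y\vecdot{v}{p_+-p_-} = \abs{\vecdot{v}{p_+-p_-}} \ge 0$, so the cross term is nonpositive and cannot simply be dropped. Instead, Cauchy--Schwarz gives $\abs{\vecdot{v}{p_+-p_-}} \le \norm{p_+-p_-} = 2\sqrt{d}$. This yields
\[
\norm{q-p_+}^2 \ge x^2 - 4x\sqrt{d} + 4d = (2\sqrt{d} - x)^2 .
\]
Equivalently, and more directly, the reverse triangle inequality gives $\norm{q-p_+} \ge \norm{p_+-p_-} - \norm{q-p_-} = 2\sqrt{d} - x$.

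Finally, $x < \sqrt{d}$ implies $2\sqrt{d} - x > \sqrt{d} > 0$, so $\norm{q-p_+} > \sqrt{d}$. There is no real obstacle. The only point needing care is the edge case $\mathrm{sgn}(\cdot) = 0$: then $y = 0$ and $q = p_-$, so $\norm{q-p_+} = 2\sqrt{d} > \sqrt{d}$ anyway. Since the triangle-inequality argument never uses $y$ beyond $\abs{y} \le 1$, giving $\norm{q-p_-} = \abs{y}x \le x$, it covers this case uniformly.
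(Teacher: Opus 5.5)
Your proof is correct and is essentially the paper's argument: expand $\norm{q-p_+}^2$, use the choice of $y$ to get $y\vecdot{v}{p_+-p_-} = \abs{\vecdot{v}{p_+-p_-}}$, bound this by Cauchy--Schwarz to obtain $(2\sqrt{d}-x)^2$, and conclude from $x<\sqrt{d}$. Your reverse-triangle-inequality shortcut $\norm{q-p_+}\ge \norm{p_+-p_-}-\norm{q-p_-} \ge 2\sqrt{d}-x$ is a valid one-line alternative. It also covers the measure-zero case $y=0$, which the paper's expansion skips by implicitly using $y^2=1$.
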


\begin{proof}[Proof of Lemma \ref{lem:ce:q_far}]
    We compute $q - p_+ = xyv - 2\cdot \vec{1}$, so $\norm{q-p_+}^2 = x^2\norm{v}^2 - 4xy\vecdot{v}{\vec{1}} + 4\norm{\vec{1}}^2
        = x^2 - 4xy\vecdot{v}{\vec{1}} + 4d.$
    Since $v = a_1/\norm{a_1}$ and $y = \mathrm{sgn}(\vecdot{a_1}{\vec{1}}) = \mathrm{sgn}(\vecdot{v}{\vec{1}})$,
    we have $y\vecdot{v}{\vec{1}} = \abs{\vecdot{v}{\vec{1}}} \ge 0$.
    By Cauchy-Schwarz, $\abs{\vecdot{v}{\vec{1}}} \le \norm{v}\norm{\vec{1}} = \sqrt{d}$, so
    \[
        \norm{q-p_+}^2 \;\ge\; x^2 - 4x\sqrt{d} + 4d = (2\sqrt{d}-x)^2.
    \]
    For $x < \sqrt{d}$, $(2\sqrt{d}-x)^2 > d$, hence $\norm{q-p_+} > \sqrt{d}$.
\end{proof}

\begin{lemma}\label{lem:ce:proj_comparison}
    If $x > \frac{1}{2}\abs{\vecdot{p_+-p_-}{v}}$ then
    $\abs{\vecdot{q-p_+}{a_1}} < \abs{\vecdot{q-p_-}{a_1}}$.
\end{lemma}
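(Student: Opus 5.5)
We prove Lemma~\ref{lem:ce:proj_comparison} by writing both inner products explicitly in terms of $\norm{a_1}$ and the single scalar $s := \vecdot{p_+-p_-}{v}$, and then comparing absolute values.

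Recall that $q = p_- + xyv$ with $v = a_1/\norm{a_1}$. Hence $q - p_- = xyv$. Since $\vecdot{v}{a_1} = \norm{a_1}$ and $|y| = 1$, we get
\[
\abs{\vecdot{q-p_-}{a_1}} = x\norm{a_1}.
\]

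Next, $q - p_+ = (q - p_-) - (p_+ - p_-) = xyv - (p_+ - p_-)$. Taking the inner product with $a_1 = \norm{a_1} v$ gives
\[
\vecdot{q-p_+}{a_1} = \norm{a_1}\,(xy - s).
\]

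The key observation is that $y = \mathrm{sgn}(\vecdot{a_1}{p_+-p_-}) = \mathrm{sgn}(s)$, because $a_1$ and $v$ differ only by the positive factor $\norm{a_1}$. Therefore $ys = \abs{s}$, and multiplying by $y$ (which does not change absolute values) gives
\[
\abs{xy - s} = \abs{x - ys} = \abs{x - \abs{s}}.
\]

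It therefore suffices to show $\abs{x - \abs{s}} < x$, assuming $\norm{a_1} > 0$, which holds almost surely. Since $x > 0$ and $\abs{s} \ge 0$, the inequality $\abs{x - \abs{s}} < x$ is equivalent to $-x < x - \abs{s} < x$. This means $0 < \abs{s} < 2x$. The upper bound $\abs{s} < 2x$ is exactly the hypothesis $x > \tfrac12 \abs{s}$.

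The only subtle case, and the main obstacle, is the degenerate one $s = 0$. Then $\abs{x - \abs{s}} = x$, so equality holds rather than strict inequality. One must also fix a convention for $\mathrm{sgn}(0)$, though it does not matter here, since $s = 0$ makes the two sides equal regardless of $y$.

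This case occurs with probability zero over the Gaussian $a_1$: $s = \vecdot{p_+-p_-}{a_1}/\norm{a_1}$, and $\vecdot{p_+-p_-}{a_1}$ is a nondegenerate Gaussian because $p_+ \ne p_-$. So I would either state the lemma almost surely, or note that in the application the conclusion is used on an event of full measure. With $s \neq 0$, the chain above gives
\[
\abs{\vecdot{q-p_+}{a_1}} = \norm{a_1}\,\abs{x - \abs{s}} < \norm{a_1}\, x = \abs{\vecdot{q-p_-}{a_1}}.
\]
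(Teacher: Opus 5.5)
Your proof is correct and follows the paper's route. Both inner products are written as $\|a_1\|$ times a scalar, and the sign alignment $y=\mathrm{sgn}(\langle p_+-p_-, v\rangle)$ reduces the claim to $|s|<2x$; the paper phrases this step via the identity $|u-w|^2=u^2-w(2u-w)$ rather than your $|x-|s||<x$. Your handling of the degenerate case $s=0$ is a genuine improvement: there both sides are equal, so the lemma holds only almost surely, and the paper's claimed equivalence $w(2u-w)>0 \iff 2|u|>|w|$ silently fails at $w=0$.
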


\begin{proof}[Proof of Lemma \ref{lem:ce:proj_comparison}]
    Since $v = a_1/\norm{a_1}$, we have $\vecdot{v}{a_1} = \norm{a_1}$, giving
    \begin{align*}
        \vecdot{q-p_-}{a_1} &= \vecdot{xyv}{a_1} = xy\norm{a_1}, \\
        \vecdot{q-p_+}{a_1} &= xy\norm{a_1} - 2\vecdot{\vec{1}}{a_1}
                              = \norm{a_1}\bigl(xy - 2\vecdot{\vec{1}}{v}\bigr).
    \end{align*}
    So $\abs{\vecdot{q-p_-}{a_1}} = x\norm{a_1}$ and it suffices to show $\abs{xy - 2\vecdot{\vec{1}}{v}} < x$.
    Set $u := xy$ and $w := 2\vecdot{\vec{1}}{v}$.
    By algebra, $\abs{u-w}^2 = u^2 - w(2u-w)$, so $\abs{u-w} < \abs{u}$ is equivalent to $w(2u-w) > 0$.
    Since $y = \mathrm{sgn}(\vecdot{v}{\vec{1}})$, $u$ and $w$ have the same sign,
    so $w(2u-w) > 0$ is equivalent to $2\abs{u} > \abs{w}$.
    The hypothesis gives $\abs{w} = \abs{\vecdot{p_+-p_-}{v}} < 2x = 2\abs{u}$.
\end{proof}

\begin{lemma}\label{lem:ce:inner_prod_bound}
    $\prob{\abs{\vecdot{p_+-p_-}{v}} > d^{0.01}} \le e^{-\Theta(d^{0.02})}$.
\end{lemma}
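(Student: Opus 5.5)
The plan is to observe that $\langle p_+-p_-, v\rangle = 2\langle \vec{1}, a_1\rangle/\|a_1\|$ and to bound the numerator and denominator separately. The numerator $\langle \vec{1}, a_1\rangle$ is a single Gaussian with distribution $\mathcal N(0,d)$, because $a_1 \sim \mathcal N(0,I_d)$ and $\|\vec 1\| = \sqrt d$. The denominator $\|a_1\|$ is at least $\sqrt d/2$ except with probability $e^{-\Theta(d)}$, by Lemma~\ref{lem:ce:norm_a1}. Rotation invariance gives an equivalent, cleaner view: $v$ is uniform on the unit sphere, so $\langle \vec 1/\sqrt d, v\rangle$ has the law of the first coordinate of a uniform unit vector. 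Either route works; I would use the explicit decomposition, since it relies only on Lemma~\ref{lem:ce:norm_a1} and standard Gaussian tails.

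\textbf{Step 1.} On the event $\|a_1\| \ge \sqrt d/2$, the event $|\langle p_+-p_-, v\rangle| > d^{0.01}$ implies
\[
|\langle \vec 1, a_1\rangle| > \tfrac{1}{2} d^{0.01}\|a_1\| \ge \tfrac14 d^{0.51}.
\]
\textbf{Step 2.} Write $\langle \vec 1, a_1\rangle = \sqrt d\, Z$ with $Z\sim\mathcal N(0,1)$. The condition from Step 1 then becomes $|Z| > d^{0.01}/4$. The upper tail bound in \eqref{eq:tail_bounds}, applied to both tails, gives
\[
\Pr\bigl[|Z| > d^{0.01}/4\bigr] \le 2e^{-d^{0.02}/32} = e^{-\Theta(d^{0.02})}.
\]
\textbf{Step 3.} A union bound over the failure of Lemma~\ref{lem:ce:norm_a1} and the Gaussian tail from Step 2 gives a total of $e^{-\Theta(d)} + e^{-\Theta(d^{0.02})} = e^{-\Theta(d^{0.02})}$.

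The only point needing care is that $v$ and $\langle \vec 1,a_1\rangle$ are dependent through $a_1$. Decoupling via the lower bound on $\|a_1\|$, rather than conditioning, sidesteps this, so I expect no real obstacle. The remaining bookkeeping is to check that the constant $1/32$ and the factor $2$ are absorbed into the $\Theta(\cdot)$ for large $d$. If a sharper constant were wanted, I would instead use the spherical-cap bound $\Pr[|\langle u, v\rangle| > s] \le 2e^{-ds^2/2}$ with $s = d^{-0.49}/2$, which yields the same exponent.
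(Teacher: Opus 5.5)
Your proof is correct, but it takes a different route from the paper. The paper uses the rotation-invariance view that you mention only as an alternative. Since $v=a_1/\|a_1\|$ is uniform on $\mathbb{S}^{d-1}$, it applies L\'evy's lemma (spherical-cap concentration), $\Pr[|\langle e,v\rangle|>\epsilon]\le 2e^{-d\epsilon^2/2}$, directly with $e=\vec 1/\sqrt d$ and $\epsilon=\tfrac12 d^{-0.49}$. This gives $2e^{-d^{0.02}/8}$ in one line, with no auxiliary failure event.

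Your decomposition $\langle p_+-p_-,v\rangle = 2\langle \vec 1,a_1\rangle/\|a_1\|$ avoids concentration on the sphere altogether. You handle the dependence between numerator and denominator through the inclusion $\{|\langle p_+-p_-,v\rangle|>d^{0.01}\}\subseteq\{\|a_1\|<\sqrt d/2\}\cup\{|\langle\vec 1,a_1\rangle|>d^{0.51}/4\}$ and a union bound. As a result you need only Lemma~\ref{lem:ce:norm_a1} and a one-dimensional Gaussian tail, both already available in the paper. The cost is a weaker constant ($1/32$ versus $1/8$) and an extra $e^{-\Theta(d)}$ term. That term is harmless, since the same event $F_1$ is already excluded in the proof of Lemma~\ref{lem:ce:main}.

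One small attribution point: the upper bound in \eqref{eq:tail_bounds} carries a $1/x$ factor. At $x=d^{0.01}/4$ that factor exceeds $1$ until $d>4^{100}$. So the clean bound $2e^{-d^{0.02}/32}$ should be credited to the Chernoff form $\Pr[|Z|>x]\le 2e^{-x^2/2}$, which holds for all $x\ge 0$. Asymptotically this changes nothing.
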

\begin{proof}
    Since $v = a_1/\norm{a_1}$, the vector $v$ is uniform on $\mathbb{S}^{d-1}$.
    By L\'{e}vy's lemma, for any unit vector $e$ and $\epsilon \in (0,1)$,
    $\Pr[\abs{\vecdot{e}{v}} > \epsilon] \le 2e^{-d\epsilon^2/2}$.
    Let $\hat{1} := \vec{1}/\sqrt{d}$, so $\vecdot{p_+-p_-}{v} = 2\vecdot{\vec{1}}{v} = 2\sqrt{d}\,\vecdot{\hat{1}}{v}$. Then
    \begin{align*}
        \Pr\!\left[\abs{\vecdot{p_+-p_-}{v}} > d^{0.01}\right]
        &= \Pr\!\left[\abs{\vecdot{\hat{1}}{v}} > \tfrac{1}{2}d^{-0.49}\right] \\
        &\le 2\exp\!\left(-\frac{d \cdot d^{-0.98}}{8}\right) \\
        &= 2e^{-d^{0.02}/8}
        = e^{-\Theta(d^{0.02})}.
    \end{align*}
\end{proof}

\begin{lemma}\label{lem:ce:tail_bound}
    There exists $t \in \Theta(\sqrt{\log N})$ such that for any $i > 1$,
    \begin{align*}
        \prob{\abs{\vecdot{q-p_+}{a_i}}
        > t(2\sqrt{d} + x)
        } < 1/N^2.
    \end{align*}
\end{lemma}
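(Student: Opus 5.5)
The plan is to use the fact that, for $i>1$, the vector $a_i$ is independent of $a_1$, while $q$ depends only on $a_1$ (through $v=a_1/\norm{a_1}$ and $y$). So I condition on $a_1$, which fixes the vector $w := q - p_+ = xyv - 2\cdot\vec{1}$, and then treat $\vecdot{w}{a_i}$ as a one-dimensional Gaussian. By rotational invariance of $a_i \sim \mathcal N(0, I_d)$, conditioned on $a_1$ we have $\vecdot{w}{a_i} \sim \mathcal N(0, \norm{w}^2)$. In other words, $\vecdot{w}{a_i} = \norm{w} Z$ with $Z \sim \mathcal N(0,1)$ independent of $a_1$.

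Next I bound $\norm{w}$ deterministically. By the triangle inequality,
\[
\norm{q-p_+} \le \norm{xyv} + \norm{2\cdot\vec{1}} = x + 2\sqrt{d},
\]
since $\norm{v}=1$ and $\abs{y}\le 1$; this holds for every realization of $a_1$. Therefore
\[
\prob{\abs{\vecdot{q-p_+}{a_i}} > t(2\sqrt{d}+x) \,\middle|\, a_1} \le \prob{\abs{Z} > t}.
\]

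The last step is the choice of $t$. By the Gaussian upper tail bound \eqref{eq:tail_bounds}, $\prob{\abs{Z} > t} \le 2e^{-t^2/2}/t$. Taking $t = 2\sqrt{\ln N}$ (assuming $N\ge 2$, so $t\ge 1$) gives $2e^{-2\ln N}/t \le 2/(N^2 t)$. To make this strictly less than $1/N^2$, I take $t = 2\sqrt{\ln N}+2$, which is still $\Theta(\sqrt{\log N})$ and gives $t>2$. Finally, averaging the conditional bound over $a_1$ yields the unconditional statement.

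The only subtle point is justifying that conditioning on $a_1$ leaves $a_i$ standard Gaussian, so that the projection onto the fixed direction $w/\norm{w}$ is exactly $\mathcal N(0,1)$. This follows from the independence of the $a_i$'s together with the rotational invariance of the standard Gaussian. Everything else is a routine calculation. For the subsequent Corollary~\ref{cor:ce:max_proj}, a union bound over the $N-1$ indices then gives failure probability at most $1/N$.
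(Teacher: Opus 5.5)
Your proposal is correct and follows the paper's proof. You condition on $a_1$, so that $q$ is fixed and $\langle q-p_+, a_i\rangle \sim \mathcal N(0,\|q-p_+\|^2)$; you bound $\|q-p_+\|\le x+2\sqrt d$ by the triangle inequality; and you finish with a Gaussian tail bound at $t=\Theta(\sqrt{\log N})$. The one difference is cosmetic: you use the Mills-ratio form $2e^{-t^2/2}/t$ with $t=2\sqrt{\ln N}+2$, which gives the strict inequality for every $N$. The paper instead uses $2e^{-t^2/2}$ with $t=C\sqrt{\log N}$, $C>2$, and needs $N$ large enough.
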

\begin{proof}
    For $i > 1$, $a_i \sim \mathcal{N}(0, I_d)$ is independent of $q = -\vec{1}+xyv$, which depends only on $a_1$.
    Conditioned on $q$, $\vecdot{q-p_+}{a_i} \sim \mathcal{N}(0, \norm{q-p_+}^2)$.
    By the triangle inequality, $\norm{q-p_+} = \norm{xyv - 2\vec{1}} \le x + 2\sqrt{d}$,
    so the threshold normalized by $\norm{q-p_+}$ satisfies
    $t(2\sqrt{d}+x)/\norm{q-p_+} \ge t$.
    By the Gaussian tail bound,
    \[
        \Pr\!\left[\abs{\vecdot{q-p_+}{a_i}} > t(2\sqrt{d}+x)\right]
        \;\le\; \Pr\!\left[\abs{Z} > t\right] \le 2e^{-t^2/2},
    \]
    for $Z \sim \mathcal{N}(0,1)$. Choosing $t = C\sqrt{\log N}$ with constant $C > 2$
    gives $2e^{-t^2/2} = 2N^{-C^2/2} < N^{-2} = 1/N^2$ for large enough $N$.
\end{proof}

\begin{corollary}\label{cor:ce:max_proj}
    \begin{align*}
        \prob{\max_{i > 1}\abs{\vecdot{q-p_+}{a_i}}
        > t(2\sqrt{d} + x)
        } < 1/N.
    \end{align*}
\end{corollary}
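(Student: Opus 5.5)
The corollary follows from Lemma~\ref{lem:ce:tail_bound} by a union bound over the indices $i \in \{2,\dots,N\}$. I will condition on $a_1$, which fixes $q = -\vec{1} + xyv$ and hence the vector $q-p_+$. The vectors $a_2,\dots,a_N$ are drawn independently of $a_1$, so conditionally on $a_1$ each $a_i$ with $i>1$ is still distributed as $\mathcal N(0,I_d)$. Lemma~\ref{lem:ce:tail_bound} then applies to each such index, with the same $t = \Theta(\sqrt{\log N})$ for all of them. Its proof shows the bound holds pointwise in $q$, because the normalized threshold $t(2\sqrt d + x)/\norm{q-p_+}$ is at least $t$ for every realization of $a_1$.

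Let $E_i := \{\abs{\vecdot{q-p_+}{a_i}} > t(2\sqrt d + x)\}$ for $i = 2,\dots,N$. The event in the corollary is exactly $\bigcup_{i>1} E_i$. A union bound gives
\[
\Pr\Big[\bigcup_{i>1} E_i \,\Big|\, a_1\Big] \le \sum_{i=2}^N \Pr[E_i \mid a_1] < (N-1)\cdot \frac{1}{N^2} < \frac1N .
\]
Taking the expectation over $a_1$ removes the conditioning and yields the unconditional bound $1/N$.

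The only delicate point is the dependence between $q$ and the projections. Since $q$ is built from $a_1$, the union must range over $i>1$ only; the index $i=1$ is handled separately in Lemma~\ref{lem:ce:main}. The conditioning argument is what justifies treating $q$ as fixed while each $a_i$, $i>1$, remains a fresh Gaussian. Apart from this, the proof is an immediate consequence of the per-index $1/N^2$ bound.
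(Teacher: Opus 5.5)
Your proposal is correct and matches the paper's proof, which is exactly a union bound of the per-index $1/N^2$ bound from Lemma~\ref{lem:ce:tail_bound} over $i=2,\dots,N$. Your explicit conditioning on $a_1$ spells out the independence that the paper already uses inside that lemma's proof. Counting $N-1$ terms also gives the strict inequality $<1/N$ more cleanly than the paper's $N\cdot 1/N^2 = 1/N$.
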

\begin{proof}
    By a union bound over $i = 2, \ldots, N$:
    \[
        \Pr\!\left[\max_{i>1}\abs{\vecdot{q-p_+}{a_i}} > t(2\sqrt{d}+x)\right]
        \;\le\; N \cdot \frac{1}{N^2} = \frac{1}{N}. \qedhere
    \]
\end{proof}

\begin{lemma}\label{lem:ce:inner_prod_pm}
    $\abs{\vecdot{q-p_-}{a_1}} = x\norm{a_1}$.
\end{lemma}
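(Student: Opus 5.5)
This is a direct computation from the definitions of $q$ and $v$. By construction $q = p_- + xyv$, so $q - p_- = xyv$ exactly, with no dependence on the other projection vectors. I will expand the inner product with $a_1$ by linearity:
\[
\vecdot{q-p_-}{a_1} = xy\,\vecdot{v}{a_1}.
\]
Next I substitute $v = a_1/\norm{a_1}$, which gives $\vecdot{v}{a_1} = \norm{a_1}^2/\norm{a_1} = \norm{a_1}$. Hence $\vecdot{q-p_-}{a_1} = xy\norm{a_1}$.

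Finally I take absolute values. Since $y = \mathrm{sgn}(\vecdot{a_1}{p_+-p_-}) \in \{-1,+1\}$, we have $\abs{y} = 1$, and $x > 0$ by assumption. This yields $\abs{\vecdot{q-p_-}{a_1}} = x\norm{a_1}$, as claimed.

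The only point needing care is that $v$ and $y$ are well defined. This requires $a_1 \neq 0$, which holds almost surely. It also requires $\vecdot{a_1}{p_+-p_-} \neq 0$ so that the sign is $\pm 1$, which again holds almost surely. If one prefers the convention $\mathrm{sgn}(0) = 1$, the requirement $\abs{y} = 1$ holds deterministically. Apart from these degeneracies, which occur with probability zero, I expect no obstacle: the statement is an identity once the construction of $q$ is unfolded.
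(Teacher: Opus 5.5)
Your proof is correct and follows the paper's argument exactly: $q-p_- = xyv$, linearity and $\vecdot{v}{a_1}=\norm{a_1}$ give $xy\norm{a_1}$, and $x>0$, $\abs{y}=1$ finish it. Your remark that $a_1\neq 0$ and $\vecdot{a_1}{p_+-p_-}\neq 0$ hold almost surely (needed for $v$ and $y$ to be well defined) is a small point the paper leaves implicit.
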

\begin{proof}
    We have $q - p_- = -\vec{1} + xyv - (-\vec{1}) = xyv$, so
    $\vecdot{q-p_-}{a_1} = xy\vecdot{v}{a_1} = xy\norm{a_1}$,
    using $\vecdot{v}{a_1} = \norm{a_1}$.
    Taking absolute values: $\abs{\vecdot{q-p_-}{a_1}} = x\norm{a_1}$ since $x > 0$ and $|y| = 1$.
\end{proof}

\end{document}